
\documentclass[10pt,twocolumn,twoside]{IEEEtran}

\IEEEoverridecommandlockouts                              	


\usepackage{amssymb,amsmath,amsfonts,amsthm}
\usepackage{algorithm,algorithmic}
\usepackage{cite}
\usepackage{float}
\usepackage{epsfig}
\usepackage{graphicx}
\usepackage{graphics}
\usepackage{multicol}
\usepackage{lipsum}
\usepackage{subfigure}
\usepackage{url}
\usepackage{verbatim}
\usepackage{xspace}
\usepackage[usenames,dvipsnames]{color}
\usepackage{setspace}
\usepackage{dblfloatfix}



\floatstyle{ruled}
\newfloat{model}{H}{mod}
\floatname{model}{\footnotesize Model}
\newfloat{notatio}{H}{not}
\floatname{notatio}{\footnotesize Notation}

\newenvironment{varalgorithm}[1]
  {\algorithm}
  {\endalgorithm}

\usepackage{url,changebar,bm,xspace,dsfont}
\let\mathbb=\mathds 


\newtheorem{prop}{Proposition}

\newtheorem{example}{\bfseries Example}

\newtheorem{remark}{Remark}

\ifCLASSINFOpdf
 \else
\fi

\hyphenation{op-tical net-works semi-conduc-tor}

\usepackage{graphics} 
\usepackage{epsfig} 
\usepackage{algorithm,algorithmic}

\usepackage{amsmath,amsfonts,amssymb,amscd}
\usepackage{capt-of}
\usepackage{color}
\usepackage{cite}
\usepackage{verbatim}


\newtheorem{lemma}{\bfseries Lemma}

\hyphenation{op-tical net-works semi-conduc-tor}

\begin{document}

\title{\LARGE \bf
Event-Triggered Quantized Average Consensus via Mass Summation}

\author{Apostolos~I.~Rikos,~\IEEEmembership{Member,~IEEE}
	\thanks{Apostolos~I.~Rikos is with Division of Decision and Control Systems, KTH Royal Institute of Technology, SE-100 44 Stockholm, Sweden. E-mail:{\tt~rikos@kth.se}.}, 
Christoforos~N.~Hadjicostis,~\IEEEmembership{Fellow,~IEEE} 
	\thanks{Christoforos~N.~Hadjicostis is with the Department of Electrical and Computer Engineering at the University of Cyprus, Nicosia, Cyprus, and also with the Department of Electrical and Computer Engineering at the University of Illinois, Urbana-Champaign, IL, USA. E-mail:{\tt~chadjic@ucy.ac.cy}.}
	\thanks{Parts of the results for quantized average consensus via event-triggered mass summation appear in \cite{2018:RikosHadj}. The present version of the paper includes complete proofs for convergence as well as an enhanced version of the algorithm that allows nodes to determine when to seize transmissions, once quantized average consensus is reached (not addressed in \cite{2018:RikosHadj}).}
}

\maketitle
\thispagestyle{empty}
\pagestyle{empty}

\providecommand{\keywords}[1]{\textbf{\textit{Index terms---}} #1}

\begin{abstract}

We study the distributed average consensus problem in multi-agent systems with directed communication links that are subject to quantized information flow. 
The goal of distributed average consensus is for the nodes, each associated with some initial value, to obtain the average (or some value close to the average) of these initial values. 
In this paper, we present and analyze novel distributed averaging algorithms which operate exclusively on quantized values (specifically, the information stored, processed and exchanged between neighboring agents is subject to deterministic uniform quantization) and rely on event-driven updates (e.g., to reduce energy consumption, communication bandwidth, network congestion, and/or processor usage). 
We characterize the properties of the proposed distributed averaging protocols on quantized values and show that their execution, on any time-invariant and strongly connected digraph, will allow all agents to reach, in finite time, a common consensus value represented as the ratio of two quantized values that is equal to the exact average. 
We conclude with examples that illustrate the operation, performance, and potential advantages of the proposed algorithms. 
\end{abstract}

\begin{keywords}
\textbf{Quantized average consensus, digraphs, event-triggered distributed algorithms, quantization, multi-agent systems.} 
\end{keywords}

\IEEEpeerreviewmaketitle

\vspace{-0.3cm}
%
%
%
%

\section{INTRODUCTION}\label{intro}

In recent years, there has been a growing interest for control and coordination of networks consisting of multiple agents, like groups of sensors \cite{2005:XiaoBoydLall} or mobile autonomous agents \cite{2004:Murray}. 
A problem of particular interest in distributed control is the \textit{consensus} problem where the objective is to develop distributed algorithms that can be used by a group of agents in order to reach agreement to a common decision. 
The agents start with different initial values/information and are allowed to communicate locally via inter-agent information exchange under some constraints on connectivity. 
Consensus processes play an important role in many problems, such as leader election \cite{1996:Lynch},  motion coordination of multi-vehicle systems \cite{2005:Olshevsky_Tsitsiklis, 2004:Murray}, and clock synchronization \cite{2007:Gamba}.


One special case of the consensus problem is distributed averaging, where each agent (initially endowed with a numerical value) can send/receive information to/from other agents in its neighborhood and update its value iteratively, so that eventually, all agents compute the average of all initial values. 
Average consensus is an important problem and has been studied extensively in settings where each agent processes and transmits real-valued states with infinite precision \cite{2018:BOOK_Hadj, 2005:Olshevsky_Tsitsiklis, 2008:Sundaram_Hadjicostis, 2013:Themis_Hadj_Johansson, 2004:XiaoBoyd, 2010:Dimakis_Rabbat, 2011:Morse_Yu, 1984:Tsitsiklis}.

Most existing algorithms, for average consensus (and also consensus) provide asymptotic convergence to the consensus value and cannot be directly applied to real-world control and coordination applications.
For this reason there has been interest on finite time (average) consensus algorithms (e.g., \cite{2007:Shreyas_Hadjicostis, 2015:Charalambous_Hadjicostis, 2013:Yuana_Goncalves}) but the challenge of designing simple finite time algorithms for these tasks remains open. 
Furthermore, in practice, due to constraints on the bandwidth of communication links and the capacity of physical memories, both communication and computation need to be performed assuming finite precision.
For these reasons, researchers have also studied the case when network links can only allow messages of limited length to be transmitted between agents, effectively extending techniques for average consensus towards the direction of quantized consensus.
Various distributed strategies have been proposed, to allow the agents in a network to reach quantized consensus \cite{2007:Aysal_Rabbat, 2012:Lavaei_Murray, 2007:Basar, 2008:Carli_Zampieri, 2016:Chamie_Basar, 2011:Cai_Ishii}. 
Apart from \cite{2016:Chamie_Basar} (which converges in a deterministic fashion but requires a communication topology that forms a doubly stochastic matrix), these existing strategies use {\em randomized} approaches to address the quantized average consensus problem (implying that all agents reach quantized average consensus with probability one or in some other probabilistic sense); the design of {\em deterministic} distributed strategies that achieve quantized average consensus remains largely unexplored. 
An additional desirable feature in many types of communication networks is the infrequent update of values to avoid consuming valuable network resources. 
Thus, there has also been an increasing interest for novel event-triggered algorithms for distributed quantized average consensus (and, more generally, distributed control), in order to achieve more efficient usage of network resources \cite{2013:Dimarogonas_Johansson, 2016:nowzari_cortes, 2012:Liu_Chen}.



In this paper, we present three novel distributed average consensus algorithms that combine the desirable features mentioned above. 
More specifically, average consensus is reached in finite time, and the processing, storing, and exchange of information between neighboring agents is subject to uniform quantization and ``event-driven''. 
Following \cite{2007:Basar, 2011:Cai_Ishii} we assume that the states are integer-valued (which comprises  a uniform class of quantization effects) and the control actuation of each node is event-based. 
We note that most work dealing with quantization has concentrated on the scenario where the agents can store and process real-valued states but can transmit only quantized values through limited rate channels (see, e.g., \cite{2008:Carli_Zampieri, 2016:Chamie_Basar}). 
By contrast, our assumption is also suited to the case where the states are stored in digital memories of finite capacity (as in \cite{2009:Nedic, 2007:Basar, 2011:Cai_Ishii}) as long as the initial values are also quantized. 
The paper establishes that the proposed algorithms allow all agents to reach quantized average consensus in finite time by reaching a value represented as the ratio of two integer values that is equal to the average. 
In the case of the probabilistic algorithm we present, this ratio equals the average in finite time with probability one.

The remainder of this paper is organized as follows. 
In Section~\ref{review} we review the existing literature related to our work while in Section~\ref{preliminaries}, we introduce the notation used throughout the paper.
In Section~\ref{probForm} we formulate the quantized average consensus problem.
In Section~\ref{rand_algorithm}, we present a probabilistic distributed algorithm, which allows the agents to reach consensus to the \textit{exact} quantized average of the initial values with probability one. 
In Section~\ref{DetAlgorithm}, we present a deterministic event-triggered version of the algorithm in Section~\ref{rand_algorithm}, and show that it reaches consensus to the \textit{exact} quantized average of the initial values after a finite number of steps, for which we also provide a worst case upper bound.  
In Section~\ref{MaxAlgorithm}, we present a deterministic event-triggered distributed algorithm, which, not only allows the agents to reach consensus to the \textit{exact} quantized average of the initial values after a finite number of steps, but also allows them to cease transmissions once quantized average consensus is reached. 
For each proposed algorithm, we analyze the operation and establish convergence to the quantized average of the initial values. 
In Section~\ref{results}, we present simulation results and comparisons. 
We conclude in Section~\ref{future} with a brief summary and remarks about future work.

%
%
%
%
\section{LITERATURE REVIEW}\label{review}

In this section, we review existing literature on algorithms for distributed averaging under quantized communication, depending on whether they converge in a probabilistic or a deterministic fashion. 

In recent years, quite a few {\em probabilistic} distributed algorithms for averaging under quantized communication, have been proposed. 
Specifically, the probabilistic quantizer in \cite{2007:Aysal_Rabbat} converges to a common value with a random quantization level for the case where the topology forms a directed graph. 
In \cite{2010:Kar_Moura} the authors present a distributed algorithm which adds a dither over the agents' measurements (before the quantization process) and they show that the mean square error can be made arbitrarily small. 
In \cite{2011:Benezit_Vetterli} the authors present a distributed algorithm which guarantees all agents to reach a consensus value on the interval in which the average lies after a finite number of time steps. 
In \cite{2012:Lavaei_Murray} the authors present a quantized gossip algorithm which deals with the distributed averaging problem over a connected weighted graph, and calculate lower and upper bounds on the expected value of the convergence time, which depend on the principal submatrices of the Laplacian matrix of the weighted graph.

The available literature concerning {\em deterministic} distributed algorithms for averaging under quantized communication, comprises less publications. 
In \cite{2011:Li_Zhang} the authors present a distributed averaging algorithm with dynamic encoding and
decoding schemes. 
They show that for a connected undirected dynamic graph, average consensus is achieved asymptotically with
as few as one bit of information exchange between each pair of adjacent agents at each time step, and the 
convergence rate is asymptotic and depends on the number of network nodes, the number of quantization levels and the synchronizability of the network.
In \cite{2013:Thanou_Frossard} the authors present a novel quantization scheme for solving the average consensus problem when sensors exchange quantized state information. 
The proposed scheme is based on progressive reduction of the range of a uniform quantizer and it leads to progressive refinement of the information exchanged by the sensors.
In \cite{2008:Carli_Zampieri} the authors derive bounds on the rate of convergence to average consensus for a team of mobile agents exchanging information over time-invariant and randomly time-varying communication networks with symmetries. 
Furthermore, they study the control performance when agents also exchange logarithmically quantized data in static communication topologies with symmetries.
In \cite{2009:Nedic} the authors study distributed algorithms for the averaging problem over networks with time-varying topology, with a focus on tight bounds on the convergence time of a general class of averaging algorithms.
They consider algorithms for the case where agents can exchange and store continuous or quantized values, establish a tight convergence rate, and show that these algorithms guarantee convergence within some error from the average of the initial values; this error depends on the number of quantization levels.

Finally, recent papers have studied the quantized average consensus problem with the additional constraint that the state of each node is an integer value. 
In \cite{2007:Basar} the authors present a probabilistic algorithm which allows every agent to reach quantized consensus almost surely for a static and undirected communication topology, while in \cite{2016:Etesami_Basar} and \cite{2014:Basar_Olshevsky} they analyze and further improve its convergence rate. 
In \cite{2011:Cai_Ishii} a probabilistic algorithm was proposed to solve the quantized consensus problem for static directed graphs for the case where the agents exchange quantized information and store the changes of their states in an additional (also quantized) variable called `surplus'. 
In \cite{2016:Chamie_Basar} the authors present a deterministic distributed averaging protocol subject to quantization on the links and show that, depending on initial conditions, the system either converges in finite time to a quantized consensus, or the nodes' enter into a cyclic behaviour with their values oscillating around the average.

%
%
%
%
\section{NOTATION AND MATHEMATICAL BACKGROUND}\label{preliminaries}

The sets of real, rational, integer and natural numbers are denoted by $ \mathbb{R}, \mathbb{Q}, \mathbb{Z}$ and $\mathbb{N}$, respectively. 
The symbol $\mathbb{Z}_+$ denotes the set of nonnegative integers.


Consider a network of $n$ ($n \geq 2$) agents communicating only with their immediate neighbors. 
The communication topology can be captured by a directed graph (digraph), called \textit{communication digraph}. 
A digraph is defined as $\mathcal{G}_d = (\mathcal{V}, \mathcal{E})$, where $\mathcal{V} =  \{v_1, v_2, \dots, v_n\}$ is the set of nodes (representing the agents) and $\mathcal{E} \subseteq \mathcal{V} \times \mathcal{V} - \{ (v_j, v_j) \ | \ v_j \in \mathcal{V} \}$ is the set of edges (self-edges excluded). 
A directed edge from node $v_i$ to node $v_j$ is denoted by $m_{ji} \triangleq (v_j, v_i) \in \mathcal{E}$, and captures the fact that node $v_j$ can receive information from node $v_i$ (but not the other way around). 
We assume that the given digraph $\mathcal{G}_d = (\mathcal{V}, \mathcal{E})$ is \textit{strongly connected} (i.e., for each pair of nodes $v_j, v_i \in \mathcal{V}$, $v_j \neq v_i$, there exists a directed \textit{path}\footnote{A directed \textit{path} from $v_i$ to $v_j$ exists if we can find a sequence of vertices $v_i \equiv v_{l_0},v_{l_1}, \dots, v_{l_t} \equiv v_j$ such that $(v_{l_{\tau+1}},v_{l_{\tau}}) \in \mathcal{E}$ for $ \tau = 0, 1, \dots , t-1$.} from $v_i$ to $v_j$). 
The subset of nodes that can directly transmit information to node $v_j$ is called the set of in-neighbors of $v_j$ and is represented by $\mathcal{N}_j^- = \{ v_i \in \mathcal{V} \; | \; (v_j,v_i)\in \mathcal{E}\}$, while the subset of nodes that can directly receive information from node $v_j$ is called the set of out-neighbors of $v_j$ and is represented by $\mathcal{N}_j^+ = \{ v_l \in \mathcal{V} \; | \; (v_l,v_j)\in \mathcal{E}\}$. 
The cardinality of $\mathcal{N}_j^-$ is called the \textit{in-degree} of $v_j$ and is denoted by $\mathcal{D}_j^-$ (i.e., $\mathcal{D}_j^- = | \mathcal{N}_j^- |$), while the cardinality of $\mathcal{N}_j^+$ is called the \textit{out-degree} of $v_j$ and is denoted by $\mathcal{D}_j^+$ (i.e., $\mathcal{D}_j^+ = | \mathcal{N}_j^+ |$).

We assume that each node is aware of its out-neighbors and can directly (or indirectly\footnote{Indirect transmission could involve broadcasting a message to all out-neighbors while including in the message header the ID of the out-neighbor it is intended for.}) transmit messages to each out-neighbor; however, it cannot necessarily receive messages (at least not directly) from them. 
In the randomized version of the protocol, each node $v_j$ assigns a nonzero \textit{probability} $b_{lj}$ to each of its outgoing edges $m_{lj}$ (including a virtual self-edge), where $v_l \in \mathcal{N}^+_j \cup \{ v_j \}$. 
This probability assignment can be captured by a column stochastic matrix $\mathcal{B} = [b_{lj}]$. 
A very simple choice\footnote{Note that this choice of nonzero probabilities is not unique. In fact, any positive values for the probabilities $b_{lj}$, for $v_l \in \mathcal{N}_j^+ \cup \{ v_j \}$, subject to the constraint that they sum to one, is also possible for the type of algorithms we discuss.} would be to set  
\begin{align*}
b_{lj} = \left\{ \begin{array}{ll}
         \frac{1}{1 + \mathcal{D}_j^+}, & \mbox{if $v_{l} \in \mathcal{N}_j^+ \cup \{v_j\}$,}\\
         0, & \mbox{otherwise.}\end{array} \right. 
\end{align*}
Each nonzero entry $b_{lj}$ of matrix $\mathcal{B}$ represents the probability of node $v_j$ transmitting towards the out-neighbor $v_l \in \mathcal{N}^+_j$ through the edge $m_{lj}$, or performing no transmission\footnote{From the definition of $\mathcal{B} = [b_{lj}]$ we have that $b_{jj} = \frac{1}{1 + \mathcal{D}_j^+}$, $\forall v_j \in \mathcal{V}$. This represents the probability that node $v_j$ will not perform a transmission to any of its out-neighbors $v_l \in \mathcal{N}^+_j$ (i.e., it will transmit to itself).}.

In the deterministic version of the protocol, each node $v_j$ assigns a \textit{unique order} in the set $\{0,1,..., \mathcal{D}_j^+ -1\}$ to each of its outgoing edges $m_{lj}$, where $v_l \in \mathcal{N}^+_j$. 
More specifically, the order of link $(v_l,v_j)$ for node $v_j$ is denoted by $P_{lj}$ (such that $\{P_{lj} \; | \; v_l \in \mathcal{N}^+_j\} = \{0,1,..., \mathcal{D}_j^+ -1\}$). 
This unique predetermined order is used during the execution of the proposed distributed algorithm as a way of allowing node $v_j$ to transmit messages to its out-neighbors in a \textit{round-robin}\footnote{When executing the deterministic protocol, each node $v_j$ transmits to its out-neighbors, one at a time, by following a predetermined order. The next time it transmits to an out-neighbor, it continues from the outgoing edge it stopped the previous time and cycles through the edges in a round-robin fashion according to the predetermined ordering.} fashion.

\section{PROBLEM FORMULATION}\label{probForm}

Consider a strongly connected digraph $\mathcal{G}_d = (\mathcal{V}, \mathcal{E})$, where each node $v_j \in \mathcal{V}$ has an initial (i.e., for $k=0$) quantized value $y_j[0]$ (for simplicity, we take $y_j[0] \in \mathbb{Z}$). 
In this paper, we develop a distributed algorithm that allows nodes (while processing and transmitting \textit{quantized} information via available communication links between nodes) to eventually obtain, after a finite number of steps, a fraction $q^s$ which is equal to the average $q$ of the initial values of the nodes, where
\begin{equation}
q = \frac{\sum_{l=1}^{n}{y_l[0]}}{n} .
\end{equation}

\begin{remark}
Following \cite{2007:Basar, 2011:Cai_Ishii} we assume that the state of each node is integer valued. 
This abstraction subsumes a class of quantization effects (e.g., uniform quantization).
\end{remark}

The algorithms we develop are iterative. 
With respect to quantization of information flow, we have that at time step $k \in \mathbb{Z}_+$ (where $\mathbb{Z}_+$ is the set of nonnegative integers), each node $v_j \in \mathcal{V}$ maintains the state variables $y^s_j, z^s_j, q_j^s$, where $y^s_j \in \mathbb{Z}$, $z^s_j \in \mathbb{Z}_+$ and  $q_j^s = \frac{y_j^s}{z_j^s}$, and the mass variables $y_j, z_j$, where $y_j \in \mathbb{Z}$ and $z_j \in \mathbb{Z}_+$.
The aggregate states are denoted by $y^s[k] = [y^s_1[k] \ ... \ y^s_n[k]]^{\rm T} \in \mathbb{Z}^n$, $z^s[k] = [z^s_1[k] \ ... \ z^s_n[k]]^{\rm T} \in \mathbb{Z}_+^n$, $q^s[k] = [q^s_1[k] \ ... \ q^s_n[k]]^{\rm T} \in \mathbb{Q}^n$ and $y[k] = [y_1[k] \ ... \ y_n[k]]^{\rm T} \in \mathbb{Z}^n$, $z[k] = [z_1[k] \ ... \ z_n[k]]^{\rm T} \in \mathbb{Z}_+^n$ respectively. 

Following the execution of the proposed distributed algorithms, we argue that there exists $\ k_0$ so that for every $k \geq k_0$ we have 
\begin{equation}\label{alpha_z_y}
y^s_j[k] = \frac{\sum_{l=1}^{n}{y_l[0]}}{\alpha}  \ \ \text{and} \ \ z^s_j[k] = \frac{n}{\alpha} ,
\end{equation}
where $\alpha \in \mathbb{N}$. This means that 
\begin{equation}\label{alpha_q}
q^s_j[k] = \frac{(\sum_{l=1}^{n}{y_l[0]}) / \alpha}{n / \alpha} = q ,
\end{equation}
for every $v_j \in \mathcal{V}$ (i.e., for $k \geq k_0$ every node $v_j$ has calculated $q$ as the ratio of two integer values).


%
%
%
%

\section{RANDOMIZED QUANTIZED AVERAGING WITH MASS SUMMATION}\label{rand_algorithm}

In this section we propose a distributed information exchange process in which the nodes, each having an integer initial value, transmit and receive quantized (integer) messages so that they reach average consensus on their initial values after a finite number of steps.

\subsection{Randomized Distributed Algorithm with Mass Summation}

\noindent
The operation of the proposed distributed algorithm is summarized below.

\noindent
\textbf{Initialization:}
Each node $v_j$ selects a set of probabilities $\{ b_{lj} \ | \ v_{l} \in \mathcal{N}_j^+ \cup \{v_j\} \}$ such that $0 < b_{lj} < 1$ and $\sum_{v_{l} \in \mathcal{N}_j^+ \cup \{v_j\}} b_{lj} = 1$ (see Section~\ref{preliminaries}). 
Each value $b_{lj}$, represents the probability for node $v_j$ to transmit towards out-neighbor $v_l \in \mathcal{N}^+_j$ (or perform a self transmission), at any given time step (independently between time steps and between nodes).  
Each node has some initial value $y_j[0]$, and also sets its state variables, for time step $k=0$, as $z_j[0] = 1$, $z^s_j[0] = 1$ and $y^s_j[0] = y_j[0]$, which means that $q^s_j[0] = y_j[0] / 1$. 

\noindent
The iteration involves the following steps:

\noindent
\textbf{Step 1. Transmitting:} According to the nonzero probabilities $b_{lj}$, assigned by node $v_j$ during the initialization step, it either transmits $z_j[k]$ and $y_j[k]$ towards a randomly selected out-neighbor $v_l \in \mathcal{N}_j^+$ or performs a self transmission. 
If it performs a transmission towards an out-neighbor $v_l \in \mathcal{N}_j^+$, it sets $y_j[k] = 0$ and $z_j[k] = 0$.

\noindent
\textbf{Step 2. Receiving:} Each node $v_j$ may receive messages $y_i[k]$ and $z_i[k]$ from its in-neighbor $v_i \in \mathcal{N}_j^-$ or itself; it sums all such messages it receives (if any) along with its stored mass variables $y_j[k]$ and $z_j[k]$ as
$$
y_j[k+1] = y_j[k] + \sum_{v_i \in \mathcal{N}_j^-} w_{ji}[k]y_i[k] ,
$$
and 
$$
z_j[k+1] = z_j[k] + \sum_{v_i \in \mathcal{N}_j^-} w_{ji}[k]z_i[k] ,
$$
where $w_{ji}[k] = 0$ (or $w_{jj}[k] = 0$) if no message is received from in-neighbor $v_i \in \mathcal{N}_j^-$; otherwise $w_{ji}[k] = 1$.

\noindent
\textbf{Step 3. Processing:} 
If $z_j[k+1] \geq z^s_j[k]$, node $v_j$ sets $z^s_j[k+1] = z_j[k+1]$, $y^s_j[k+1] = y_j[k+1]$ and 
$$
q^s_j[k+1] = \frac{y^s_j[k+1]}{z^s_j[k+1]} .
$$
Then, $k$ is set to $k+1$ and the iteration repeats (it goes back to Step~1).


The proposed algorithm is essentially a probabilistic quantized mass transfer process and is detailed as Algorithm~\ref{algorithm_prob} below (for the case when $b_{lj} = 1/(1+\mathcal{D}_j^+)$ for $v_l \in \mathcal{N}_j^+ \cup \{ v_j \}$ and $b_{lj}=0$ otherwise). 
Due to space limitations we do not illustrate the operation of the proposed algorithm, however, an analytical illustration can be found in \cite{2018:RikosHadj}.

\noindent
\vspace{-0.5cm}    
\begin{varalgorithm}{1}
\caption{Probabilistic Quantized Average Consensus}
\textbf{Input} 
\\ 1) A strongly connected digraph $\mathcal{G}_d = (\mathcal{V}, \mathcal{E})$ with $n=|\mathcal{V}|$ nodes and $m=|\mathcal{E}|$ edges. 
\\ 2) For every $v_j \in \mathcal{V}$ we have $y_j[0] \in \mathbb{Z}$. 
\\
\textbf{Initialization} 
\\ Every node $v_j \in \mathcal{V}$ does the following:
\\ 1) It assigns a nonzero probability $b_{lj}$ to each of its outgoing edges $m_{lj}$ and its self-edge, where $v_l \in \mathcal{N}^+_j \cup \{v_j\}$, as follows  
\begin{align*}
b_{lj} = \left\{ \begin{array}{ll}
         \frac{1}{1 + \mathcal{D}_j^+}, & \mbox{if $l = j$ or $v_{l} \in \mathcal{N}_j^+$,}\\
         0, & \mbox{if $l \neq j$ and $v_{l} \notin \mathcal{N}_j^+$.}\end{array} \right. 
\end{align*}
\\ 2) It sets $z_j[0] = 1$, $z^s_j[0] = 1$ and $y^s_j[0] = y_j[0]$ (which means that $q^s_j[0] = y_j[0] / 1$). 
\\
\textbf{Iteration}
\\ For $k=0,1,2,\dots$, each node $v_j \in \mathcal{V}$ does the following:
\\ 1) It either transmits $y_j[k]$ and $z_j[k]$ towards a randomly chosen out-neighbor $v_l \in \mathcal{N}_j^+$ (according to the nonzero probability $b_{lj}$) or performs a self transmission (according to the nonzero probability $b_{jj}$). If it transmitted towards an out-neighbor, it sets $y_j[k] = 0$ and $z_j[k] = 0$.
\\ 2) It receives $y_i[k]$ and $z_i[k]$ from its in-neighbors $v_i \in \mathcal{N}_j^-$ and sets 
$$
y_j[k+1] = y_j[k] + \sum_{v_i \in \mathcal{N}_j^-} w_{ji}[k]y_i[k] ,
$$
and 
$$
z_j[k+1] = z_j[k] + \sum_{v_i \in \mathcal{N}_j^-} w_{ji}[k]z_i[k] ,
$$
where $w_{ji}[k] = 1$ if node $v_j$ receives values from node $v_i$ (otherwise $w_{ji}[k] = 0$). 
\\ 3) If the following condition holds,
\begin{equation}
z_j[k+1] \geq z^s_j[k],
\end{equation}
it sets $z^s_j[k+1] = z_j[k+1]$, $y^s_j[k+1] = y_j[k+1]$, which means that $q^s_j[k+1] = \frac{y^s_j[k+1]}{z^s_j[k+1]}$. 
\\ 4) It repeats (increases $k$ to $k + 1$ and goes back to Step~1).
\label{algorithm_prob}
\end{varalgorithm}

\begin{remark}
From the operation of Algorithm~\ref{algorithm_prob}, it is important to notice that, once the initial mass variables ``merge'' (i.e., Step~$2$ of the Iteration of Algorithm~\ref{algorithm_prob}), they \textit{remain} ``merged'' during the operation of Algorithm~\ref{algorithm_prob}. 
\end{remark}

\subsection{Finite Time Convergence Analysis}

We are now ready to prove that during the operation of Algorithm~\ref{algorithm_prob} each agent obtains two integer values $y^s$ and $z^s$, the ratio of which is equal to the average $q$ of the initial values of the nodes. 

\begin{prop}
\label{PROP1_prob}
Consider a strongly connected digraph $\mathcal{G}_d = (\mathcal{V}, \mathcal{E})$ with $n=|\mathcal{V}|$ nodes and $m=|\mathcal{E}|$ edges, and $z_j[0] = 1$ and $y_j[0] \in \mathbb{Z}$ for every node $v_j \in \mathcal{V}$ at time step $k=0$. 
Suppose that each node $v_j \in \mathcal{V}$ follows the Initialization and Iteration steps as described in Algorithm~\ref{algorithm_prob}. 
Let $\mathcal{V}^+[k] \subseteq \mathcal{V}$ be the set of nodes $v_j$ with positive mass variable $z_j[k]$ at iteration $k$ (i.e., $\mathcal{V}^+[k] = \{ v_j \in \mathcal{V} \; | \; z_j[k] > 0 \}$). 
During the execution of Algorithm~\ref{algorithm_prob}, for every $k \geq 0$, we have that
$$
1 \leq | \mathcal{V}^+[k+1] | \leq | \mathcal{V}^+[k] | \leq n .
$$
\end{prop}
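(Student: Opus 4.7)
The plan is to track the mass variable $z_j[k]$ through one iteration of Algorithm~\ref{algorithm_prob} and exploit two structural observations: the total mass is conserved, and each node transmits its mass to exactly one destination per iteration. Formally, for each time step $k$ I would introduce the destination function $f_k \colon \mathcal{V} \to \mathcal{V}$, where $f_k(v_i) = v_l$ if node $v_i$ transmits to out-neighbor $v_l$ at step $k$, and $f_k(v_i) = v_i$ if it performs a self-transmission. Steps~1 and~2 of the iteration then amount to
\[
z_j[k+1] \;=\; \sum_{v_i \in f_k^{-1}(v_j)} z_i[k],
\]
so summing over $v_j \in \mathcal{V}$ gives the conservation law $\sum_{v_j} z_j[k] = \sum_{v_j} z_j[0] = n$ for all $k$. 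Since each $z_j[k] \in \mathbb{Z}_+$, this immediately yields $|\mathcal{V}^+[k]| \geq 1$, because a nonnegative integer vector summing to $n>0$ cannot have every entry zero.

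The upper bound $|\mathcal{V}^+[k]| \leq n$ is immediate from $\mathcal{V}^+[k] \subseteq \mathcal{V}$, so the substantive claim is the monotonicity $|\mathcal{V}^+[k+1]| \leq |\mathcal{V}^+[k]|$. I would establish this by proving the set equality $\mathcal{V}^+[k+1] = f_k(\mathcal{V}^+[k])$. For the forward inclusion, if $v_j \in \mathcal{V}^+[k+1]$ then $z_j[k+1] > 0$, and the displayed update rule forces some $v_i \in f_k^{-1}(v_j)$ with $z_i[k] > 0$, i.e.\ $v_i \in \mathcal{V}^+[k]$ and $f_k(v_i) = v_j$. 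For the reverse inclusion, if $v_j = f_k(v_i)$ for some $v_i \in \mathcal{V}^+[k]$, then $z_j[k+1] \geq z_i[k] > 0$ since all summands in the update are nonnegative. The monotonicity then follows from the general fact that the image of a set under a function cannot exceed its cardinality: $|\mathcal{V}^+[k+1]| = |f_k(\mathcal{V}^+[k])| \leq |\mathcal{V}^+[k]|$.

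The main obstacle is not analytical but bookkeeping: one must be careful that the ``zeroing'' in Step~1 and the additive update in Step~2 together produce precisely the fan-in rule written above, so that $f_k$ genuinely encodes the mass-transfer dynamics. Once the destination-function viewpoint is set up, the three inequalities $1 \leq |\mathcal{V}^+[k+1]| \leq |\mathcal{V}^+[k]| \leq n$ reduce to (i) mass conservation plus positivity, (ii) the image-cardinality inequality for a function, and (iii) the trivial containment in $\mathcal{V}$. An induction base at $k=0$ with $\mathcal{V}^+[0] = \mathcal{V}$ (since $z_j[0]=1$ for every $v_j$) closes the argument.
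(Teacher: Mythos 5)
Your proof is correct, and it is in substance the same mass-transfer argument the paper uses, but packaged more cleanly. The paper writes Steps~1--2 as $z[k+1]=W[k]\,z[k]$ with $W[k]$ a binary column-stochastic matrix and then argues non-increase of $|\mathcal{V}^+[k]|$ by examining scenarios for the rows of $W[k]$ (exactly one entry equal to $1$ per row, then two entries in one row, then ``extending the above analysis'' to the general case). Your destination function $f_k$ is precisely the combinatorial content of that binary column-stochastic matrix (each column has a single $1$, located in row $f_k(j)$), but your key lemma --- the set identity $\mathcal{V}^+[k+1]=f_k(\mathcal{V}^+[k])$ combined with the image-cardinality bound $|f_k(S)|\le|S|$ --- disposes of all cases at once and closes the gap the paper leaves with its ``by extending the above analysis'' step. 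You also explicitly derive the lower bound $1\le|\mathcal{V}^+[k+1]|$ from conservation of the total mass $\sum_j z_j[k]=n$, which the paper states in the proposition but does not argue in its proof (it follows there too from column stochasticity, but is left implicit). The one bookkeeping point you flag --- that the zeroing in the transmit step and the summation in the receive step together yield exactly the fan-in rule $z_j[k+1]=\sum_{v_i\in f_k^{-1}(v_j)}z_i[k]$ --- does check out: a self-transmission keeps $z_i[k]$ in the retained term, while a transmission to an out-neighbor zeroes the retained term and delivers the pre-zeroing value to the recipient, so nothing is lost or double-counted.
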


\begin{proof}
Steps~$1$ and $2$ at iteration $k$ of Algorithm~\ref{algorithm_prob} can be expressed according to the following equations 
\begin{eqnarray}\label{y_exch}
y[k+1] = W[k] \ y[k] ,
\end{eqnarray}
\begin{eqnarray}\label{z_exch}
z[k+1] = W[k] \ z[k] ,
\end{eqnarray}
where $y[k] = [y_1[k] \ ... \ y_n[k]]^{\rm T}$, $z[k] = [z_1[k] \ ... \ z_n[k]]^{\rm T}$ and $W[k] = [w_{lj}[k]]$ is an $n \times n$ binary column stochastic matrix. 
More specifically, for every $k$, the weights $w_{lj}[k]$, for $l=j$ or $l$ such that $(v_l,v_j) \in \mathcal{E}$, are either equal to $1$ or $0$, and furthermore each column sums to one.

Focusing on (\ref{z_exch}), at time step $k_0$, let us assume without loss of generality that
$z[k_0] = \left [ z_1[k_0] \dots z_{p_0}[k_0] \; 0 \dots \; 0 \right ]^{\rm T}$, where $z_i[k_0] > 0$, $\forall \ v_i \in \{ v_1, \cdots, v_{p_0} \}$ and $z_l[k_0] = 0$, $\forall \ v_l \in \mathcal{V} - \{ v_1, \cdots, v_{p_0} \}$. 
We can assume without loss of generality that the nodes with zero mass do not transmit (or transmit to themselves). 
Let us consider the scenario where $\sum_{v_i \in \mathcal{N}_j^- \cup \{v_j\}} w_{ji}[k_0] = 1$, $\forall \ v_j \in \mathcal{V}$ (i.e., for every row of $W[k_0]$ \textit{exactly} one element is equal to $1$ and all the other elements are equal to zero). 
This means that each node $v_j$ will receive at most one mass variable $z_i[k_0]$ and, since, at time step $k_0$, we have $p_0$ nodes with nonzero mass variables, we have that at time step $k_0 + 1$, exactly $p_0$ nodes have a nonzero mass variable. 
As a result, for this scenario, we have $| \mathcal{V}^+[k_0+1] | = | \mathcal{V}^+[k_0] |$.

Let us now consider the scenario where $w_{ji_1}[k_0] = 1$, $w_{ji_2}[k_0] = 1$ (where $v_{i_1}, v_{i_2} \ \in ( \mathcal{N}_j^- \cup \{ v_j \} ) \cap \mathcal{V}^+[k_0] $) and $w_{ji}[k_0] = 0, \forall \ v_i \in ( \mathcal{N}_j^- \cup \{ v_j \} ) - \{ v_{i_1}, v_{i_2} \}$ (i.e., the $j^{th}$ row of matrix $\mathcal{W}[k_0]$ has exactly $2$ elements equal to $1$ and all the other elements zero). 
Also, let us assume that $\sum_{v_i \in \mathcal{N}_l^- \cup \{v_l\}} w_{li}[k_0] \leq 1$, $\forall \ v_l \in \mathcal{V} - \{ v_j \}$ (i.e., for every row of $W[k_0]$ (except row $j$) \textit{at most} one element is equal to $1$ and all the other elements are equal to zero). 
The above assumptions, regarding matrix $W[k]$, mean that, during time step $k_0$, only node $v_j$ will receive two mass variables (from nodes $v_{i_1}$ and $v_{i_2}$) and all the other nodes will receive at most one mass variable. 
We have that $z_{j}[k_0 + 1] = z_{i_1}[k_0] + z_{i_2}[k_0]$ and $z_{l}[k_0 + 1] = z_{i(l)}[k_0]$, for $v_l \in \mathcal{V} - \{ v_j \}$ and some $v_{i(l)} \in \mathcal{V} - \{ v_{i_1}, v_{i_2} \}$ (i.e., node $v_j$ received two nonzero mass variables while all the other nodes received at most one nonzero mass variable, also counting its own mass variables). 
Since, at time step $k_0$, we had $p_0$ nodes with nonzero mass variables and at time step $k_0 + 1$ node $v_j$ received (and summed) two nonzero mass variables, while all the other nodes received at most one nonzero mass variable, this means that, at time step $k_0 + 1$, we have $p_0 - 1$ nodes with nonzero mass variables. 
This means that $| \mathcal{V}^+[k_0+1] | < | \mathcal{V}^+[k_0] |$.

%

By extending the above analysis for scenarios where each row of $W[k]$, at different time steps $k$, may have multiple elements equal to $1$ (but $W[k]$ remains column stochastic), we can see that the number of nodes $v_j$ with nonzero mass variable $z_{j}[k] > 0$ is non-increasing and thus we have $| \mathcal{V}^+[k+1] | \leq | \mathcal{V}^+[k] |$, $\forall \ k \in \mathbb{Z}_+$. 
\end{proof}

\begin{prop}
\label{PROP2_prob}
Consider a strongly connected digraph $\mathcal{G}_d = (\mathcal{V}, \mathcal{E})$ with $n=|\mathcal{V}|$ nodes and $m=|\mathcal{E}|$ edges and $z_j[0] = 1$ and $y_j[0] \in \mathbb{Z}$ for every node $v_j \in \mathcal{V}$ at time step $k=0$. 
Suppose that each node $v_j \in \mathcal{V}$ follows the Initialization and Iteration steps as described in Algorithm~\ref{algorithm_prob}. 
With probability one, we can find $k_0 \in \mathbb{Z}_+$, so that for every $k \geq k_0$ we have 
$$
y^s_j[k] = \sum_{l=1}^{n}{y_l[0]}  \ \ \text{and} \ \ z^s_j[k] = n , \ \ \forall v_j \in \mathcal{V}
$$
which means that 
$$
q^s_j[k] = \frac{\sum_{l=1}^{n}{y_l[0]}}{n} ,
$$
for every $v_j \in \mathcal{V}$ (i.e., for $k \geq k_0$ every node $v_j$ has calculated $q$ as the ratio of two integer values). 
\end{prop}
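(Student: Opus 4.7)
The plan is to decompose the argument into two phases. First, a \emph{merging} phase in which the nonzero mass variables coalesce into a single node almost surely in finite time; second, a \emph{dissemination} phase in which the coalesced mass visits every node of the network almost surely in finite time, causing each node's running-maximum state $(y_j^s, z_j^s)$ to latch onto the correct pair $(\sum_l y_l[0],\, n)$. Throughout I would exploit two invariants: that $W[k]$ in \eqref{y_exch}--\eqref{z_exch} is column stochastic (so the totals $\sum_j y_j[k]$ and $\sum_j z_j[k] = n$ are preserved), and that Proposition~\ref{PROP1_prob} makes $|\mathcal{V}^+[k]|$ non-increasing.

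For the merging phase, since $|\mathcal{V}^+[k]|$ is a non-increasing integer sequence bounded below by $1$, it is enough to show that whenever $|\mathcal{V}^+[k]| \geq 2$ the cardinality drops strictly within the next $n-1$ steps with probability at least some constant $p_0 > 0$ depending only on $n$. To construct such an event I would fix any two nodes $v_{j_1}, v_{j_2} \in \mathcal{V}^+[k]$, use strong connectivity to pick a directed path $v_{j_1} = u_0 \to u_1 \to \cdots \to u_\ell = v_{j_2}$ of length $\ell \leq n-1$, and at each of the next $\ell$ steps require that (i) the node currently carrying the ``moving'' packet transmit forward along the path, and (ii) every other node in the current $\mathcal{V}^+$ perform a self-transmission. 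Since every transmission probability $b_{lj}$ is at least $1/(n)$ and at most $n$ nodes must make a specified choice per step, this joint event has probability at least $(1/n)^{n(n-1)}$; on it, the moving packet eventually reaches either $v_{j_2}$ or some earlier $u_t$ that already carries mass, and the two masses merge, strictly decreasing $|\mathcal{V}^+|$. A geometric-trials (or Borel--Cantelli) argument over consecutive windows of length $n-1$ then gives $|\mathcal{V}^+[k]| = 1$ almost surely in finite time.

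For the dissemination phase, let $k_1$ be the (a.s.\ finite) merging time and $v_{j^*}$ the surviving node. Column stochasticity and the initial conditions give $z_{j^*}[k_1] = n$ and $y_{j^*}[k_1] = \sum_l y_l[0]$; moreover, since after $k_1$ the unique active node either self-transmits or forwards its entire mass to a single out-neighbor, we keep $|\mathcal{V}^+[k]| = 1$ for all $k \geq k_1$, with the lone nonzero entry always equal to $(\sum_l y_l[0],\, n)$. Applying the same path-routing idea to steer this packet from its current carrier to any target $v_j$, I would obtain a uniformly positive per-window probability that $v_j$ receives it, hence an a.s.\ finite hitting time $\kappa_j$ for each node. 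At step $\kappa_j$ we have $z_j[\kappa_j] = n \geq z_j^s[\kappa_j - 1]$, so Step~3 of Algorithm~\ref{algorithm_prob} latches $z_j^s = n$ and $y_j^s = \sum_l y_l[0]$; because $n$ is the global maximum value $z_j[\cdot]$ can ever take, these values are preserved forever thereafter. Taking $k_0 = \max_j \kappa_j$ then completes the statement.

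The main obstacle will be establishing the uniform lower bound on the per-window merging probability in Phase~1. The subtlety is that while one packet is being routed along the fixed path, the \emph{other} active nodes might, if allowed to transmit arbitrarily, hand mass off to fresh neighbors and change the combinatorial configuration of $\mathcal{V}^+$; I must therefore carefully condition on the joint event that \emph{every} currently active node behaves as prescribed, and verify that the resulting bound $(1/n)^{n(n-1)}$ depends only on $n$ so that the geometric-trials argument can be iterated independently of the (shrinking) history.
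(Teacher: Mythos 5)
Your proposal is correct and follows essentially the same strategy as the paper: establish a uniform positive lower bound on the probability that two nonzero masses merge within a window of $n-1$ steps (using strong connectivity to route masses along short directed paths), iterate over windows to conclude almost-sure coalescence into a single node carrying $\bigl(\sum_l y_l[0],\, n\bigr)$, and then argue that this full mass visits every node so that Step~3 latches each node's state variables onto the correct pair. The only difference is cosmetic: the paper routes the \emph{two} masses along two equal-length paths to a common meeting node (bounding the event probability by $\bigl((1+\mathcal{D}_{max}^+)^{-1}\bigr)^{2(n-1)}$ without constraining the other nodes), whereas you route one mass to the other while forcing all remaining active nodes to self-transmit, which yields the cruder but equally sufficient bound $(1/n)^{n(n-1)}$; your dissemination phase is in fact spelled out more carefully than the paper's.
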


\begin{proof}
From Proposition~\ref{PROP1_prob} we have that $| \mathcal{V}^+[k+1] | \leq | \mathcal{V}^+[k] |$ (i.e., the number of nonzero mass variables is non-increasing). 
We will first show that the number of nonzero mass variables is decreasing after a finite number of steps, until, at some $k'_0 \in \mathbb{Z}_+$, we have $y_j[k'_0] =\sum_{l=1}^{n}{y_l[0]}  \ \ \text{and} \ \ z_j[k'_0] = n$, for some node $v_j \in \mathcal{V}$, and $y_i[k'_0] = 0  \ \ \text{and} \ \ z_i[k'_0] = 0$, for each $ v_i \in \mathcal{V} - \{ v_j \}$).

We have that Steps~$1$ and $2$ at iteration (time step) $k$ can be expressed according to (\ref{y_exch}) and (\ref{z_exch}).
Focusing on (\ref{z_exch}), consider for example, two nodes $v_i$ and $v_j$ that happen to share a common out-neighbor (say $v_l$): suppose that, during time step $k_0$, we have $z_i[k_0] > 0$, $z_j [k_0] > 0$  and $w_{li}[k_0] = 1$, $w_{lj} [k_0] = 1$. 
This scenario will occur with probability equal to $(1 + \mathcal{D}_i^+)^{-1}(1 + \mathcal{D}_j^+)^{-1}$ (i.e., as long as nodes $v_i$ and $v_j$ both transmit towards node $v_l$). 
Of course, for this to happen we need to have node $v_l$ be a common neighbor to nodes $v_i$ and $v_j$.
More generally, since the graph is strongly connected, for any pair of nodes $v_i$ and $v_j$, we can find a node (say $v_l$) and two paths (of length at most $n-1$) such that the first path $p_{li}$ connects $v_i$ to $v_l$ and the second path $p_{lj}$ connects $v_j$ to $v_l$. 
If the two paths are not of equal length, we can make them of equal length (at most $n-1$) by inserting one (or more) self loops in the shortest of the two paths ($p_{li}$ or $p_{lj}$). 
Then, it is easy to see that if, during time step $k_0$, we have $z_i[k_0] > 0$, $z_j[k_0] > 0$ (for any two nodes, $v_i$ and $v_j$, $i \neq j$), the two masses will merge at some node $v_l$ after at most $n-1$ steps, with probability 
\begin{eqnarray}\label{two_to_all}
\text{P}_{\text{two merge}} & = & \big( \prod_{v_{j'} \in p_{lj}}(1 + \mathcal{D}_{j'}^+)^{-1} \big) \big( \prod_{v_{i'} \in p_{li}}(1 + \mathcal{D}_{i'}^+)^{-1} \big) \nonumber \\
 & \geq & \big( \prod_{v_{j'} \in p_{lj}}(1 + \mathcal{D}_{max}^+)^{-1} \big) \big( \prod_{v_{i'} \in p_{li}}(1 + \mathcal{D}_{max}^+)^{-1} \big) \nonumber
 \\
 & \geq & \big((1 + \mathcal{D}_{max}^+)^{-1} \big)^{2(n-1)} \; ,
\end{eqnarray}
where $\mathcal{D}_{max}^+ = \max_{v_j \in \mathcal{V}}{\mathcal{D}_j^+}$ and $| p_{lj} | = | p_{li} |$ (since by inserting a sufficient number of self loops in the (shorter of the two) paths we can make the lengths of both paths $p_{lj}$ and $p_{li}$ equal (at most) to $n-1$). 
Note that the notation $v_{j'} \in p_{lj}$ means that there exists a directed path $p_{lj}$ consisting of a $v_j \equiv v_{l_0},v_{l_1}, \dots, v_{l_t} \equiv v_l$ (such that $(v_{l_{\tau+1}},v_{l_{\tau}}) \in \mathcal{E}$ for $ \tau = 0, 1, \dots , t-1$ i.e., a directed path from $v_j$ to $v_l$), and $v_{j'} \in \{ v_{l_0},v_{l_1}, \dots, v_{l_t} \}$. 
Note that (\ref{two_to_all}) provides a lower bound on the probability that, every $n$ time steps, two (or more) masses merge into one mass.

By extending the above discussion, we have that after $k = \tau (n-1)$ time steps (i.e., $\tau$ ``windows'', $\tau \geq (n-1)$, each consisting of $n-1$ time steps) the probability that all $n$ masses will ``merge'' into one mass is
\begin{eqnarray}\label{one_mass}
\text{P}_{\text{single mass}} & \geq & 1 - \sum_{j=0}^{n-2} {\tau \choose j} \big(1 - \text{P}_{\text{two merge}} \big)^{\tau - j} \ \text{P}_{\text{two merge}}^j \; , \nonumber
\end{eqnarray}
(where the summation on the right is an upper bound on the probability that $n-2$ or less mergings occur over the $\tau$ windows of length $n-1$.

Thus, by executing Algorithm~\ref{algorithm_prob} for $\tau$ ``windows'' (each consisting of $n-1$ time steps), we have that 
\begin{eqnarray}
\lim_{\tau \rightarrow \infty} \text{P}_{\text{single mass}} & = & 1 \; . \nonumber
\end{eqnarray}

\noindent
This means that, with probability one, $\exists k'_0 \in \mathbb{Z}_+$ for which $y_j[k'_0] =\sum_{l=1}^{n}{y_l[0]}  \ \ \text{and} \ \ z_j[k'_0] = n$, for some node $v_j \in \mathcal{V}$, and $y_i[k'_0] = 0,  \ \ \text{and} \ \ z_i[k'_0] = 0$, for each $ v_i \in \mathcal{V} - \{ v_j \}$. 
Once this ``merging'' of all nonzero mass variables occurs, we have that the nonzero mass variables of node $v_j$ will update the state variables of every node $v_i \in \mathcal{V}$ (because it will eventually be forwarded to all other nodes), which means that $\exists k_0 \in \mathbb{Z}_+$ (where $k_0 > k'_0$) for which $y_i^s[k_0] =\sum_{l=1}^{n}{y_l[0]}  \ \ \text{and} \ \ z_i^s[k_0] = n$, for every node $v_i \in \mathcal{V}$. 
Therefore, after a finite number of steps, (\ref{alpha_z_y}) and (\ref{alpha_q}) will hold for every node $v_j \in \mathcal{V}$ for the case where $\alpha = 1$. 
\end{proof}

%
%
%
%
\section{EVENT-TRIGGERED QUANTIZED AVERAGING ALGORITHM WITH MASS SUMMATION}\label{DetAlgorithm}

In this section we propose a distributed algorithm in which the nodes receive quantized messages and perform transmissions according to a set of deterministic \textit{conditions}, so that they reach quantized average consensus on their initial values. 
This allows the calculation of an explicit worst-case upper bound regarding the number of steps required for quantized consensus. 
Unlike the operation of Algorithm~\ref{algorithm_prob} where, after a finite number of steps $k_0$, (\ref{alpha_z_y}) and (\ref{alpha_q}) will hold for each node $v_j$ with $\alpha = 1$ (at least with high probability), we will see that $\alpha$ can be (under some rare circumstances) an integer larger than $1$ in the deterministic algorithm of this section.

\subsection{Event-Triggered Deterministic Distributed Algorithm with Mass Summation}

The operation of the proposed distributed algorithm is summarized below.

\noindent
\textbf{Initialization:}
Each node $v_j$ assigns to each of its outgoing edges $v_l \in \mathcal{N}^+_j$ a \textit{unique order} $P_{lj}$ in the set $\{0,1,..., \mathcal{D}_j^+ -1\}$, which will be used to transmit messages to its out-neighbors in a round-robin fashion. 
Node $v_j$ has initial value $y_j[0]$ and sets its state variables, for time step $k=0$, as $z_j[0] = 1$, $z^s_j[0] = 1$ and $y^s_j[0] = y_j[0]$, which means that $q^s_j[0] = y_j[0] / 1$.
Then, it chooses an out-neighbor $v_l \in \mathcal{N}_j^+$ (according to the predetermined order $P_{lj}$) and transmits $z_j[0]$ and $y_j[0]$ to that particular neighbor. 
Then, it sets $y_j[0] = 0$ and $z_j[0] = 0$ (since it performed a transmission).

\noindent
The iteration involves the following steps:

\noindent
\textbf{Step 1. Receiving:} Each node $v_j$ receives messages $y_i[k]$ and $z_i[k]$ from its in-neighbors $v_i \in \mathcal{N}_j^-$ and sums them to obtain
$$
y_j[k+1] = y_j[k] + \sum_{v_i \in \mathcal{N}_j^-} w_{ji}[k]y_i[k] ,
$$
and 
$$
z_j[k+1] = z_j[k] + \sum_{v_i \in \mathcal{N}_j^-} w_{ji}[k]z_i[k] ,
$$
where $w_{ji}[k] = 0$ if no message is received from in-neighbor $v_i \in \mathcal{N}_j^-$; otherwise $w_{ji}[k] = 1$. 

\noindent
\textbf{Step 2. Event Trigger Conditions:} Node $v_j$ checks the following conditions:
\begin{enumerate}
\item It checks whether $z_j[k+1]$ is greater than $z^s_j[k]$.
\item If $z_j[k+1]$ is equal to $z^s_j[k]$, it checks whether $y_j[k+1]$ is greater than or equal to $y^s_j[k]$.
\end{enumerate}
If one of the above two conditions holds, it sets $y^s_j[k + 1] = y_j[k+1]$, $z^s_j[k + 1] = z_j[k+1]$ and $q_j^s[k+1] = \frac{y_j^s[k+1]}{z_j^s[k+1]}$. 

\noindent
\textbf{Step 3. Transmitting:} If the ``Event Trigger Conditions'' above do not hold, no transmission is performed. 
Otherwise, if the ``Event Trigger Conditions'' above hold, node $v_j$ chooses an out-neighbor $v_l \in \mathcal{N}_j^+$ according to the order $P_{lj}$ (in a round-robin fashion) and transmits $z_j[k+1]$ and $y_j[k+1]$. 
Then, since it transmitted its stored mass, it sets $y_j[k+1] = 0$, $z_j[k+1] = 0$. 
Regardless of whether it transmitted or not, node $v_j$ sets $k$ to $k+1$ and the iteration repeats (it goes back to Step~1). 

This event-based quantized mass transfer process is summarized as Algorithm~\ref{algorithm1}. 
Note that the ``Event Trigger Conditions'' effectively imply that no transmission is performed if $z_j[k]=0$.

\noindent
\vspace{-0.3cm}    
\begin{varalgorithm}{2}
\caption{Deterministic Quantized Average Consensus}
\textbf{Input} 
\\ 1) A strongly connected digraph $\mathcal{G}_d = (\mathcal{V}, \mathcal{E})$ with $n=|\mathcal{V}|$ nodes and $m=|\mathcal{E}|$ edges. 
\\ 2) For every $v_j$ we have $y_j[0] \in \mathbb{Z}$. 
\\
\textbf{Initialization} 
\\ Every node $v_j \in \mathcal{V}$ does the following:
\\ 1) It assigns to each of its outgoing edges $v_l \in \mathcal{N}^+_j$ a \textit{unique order} $P_{lj}$ in the set $\{0,1,..., \mathcal{D}_j^+ -1\}$.
\\ 2) It sets $z_j[0] = 1$, $z^s_j[0] = 1$ and $y^s_j[0] = y_j[0]$ (which means that $q^s_j[0] = y_j[0] / 1$). 
\\ 3) It chooses an out-neighbor $v_l \in \mathcal{N}_j^+$ according to the predetermined order $P_{lj}$ (initially, it chooses $v_l \in \mathcal{N}_j^+$ such that $P_{lj}=0$) and transmits $z_j[0]$ and $y_j[0]$ to this out-neighbor. Then, it sets $y_j[0] = 0$ and $z_j[0] = 0$.
\\
\textbf{Iteration}
\\ For $k=0,1,2,\dots$, each node $v_j \in \mathcal{V}$ does the following:
\\ 1) It receives $y_i[k]$ and $z_i[k]$ from its in-neighbors $v_i \in \mathcal{N}_j^-$ and sets 
$$
y_j[k+1] = y_j[k] + \sum_{v_i \in \mathcal{N}_j^-} w_{ji}[k]y_i[k] ,
$$
and 
$$
z_j[k+1] = z_j[k] + \sum_{v_i \in \mathcal{N}_j^-} w_{ji}[k]z_i[k] ,
$$
where  $w_{ji}[k] = 0$ if no message is received (otherwise $w_{ji}[k] = 1$). 
\\ 2) \underline{Event Trigger Conditions:} If one of the following two conditions hold, node $v_j$ performs Steps $2a$ and $2b$ below (otherwise it skips Steps $2a$ and $2b$).
\\ Condition~$1$: $z_j[k+1] > z^s_j[k]$.
\\ Condition~$2$: $z_j[k+1] = z^s_j[k]$ and $y_j[k+1] \geq y^s_j[k]$.
\\ 2a) It sets $z^s_j[k+1] = z_j[k+1]$ and $y^s_j[k+1] = y_j[k+1]$ which implies that 
$$
q^s_j[k+1] = \frac{y^s_j[k+1]}{z^s_j[k+1]} .
$$
2b) It chooses an out-neighbor $v_l \in \mathcal{N}_j^+$ according to the order $P_{lj}$ (in a round-robin fashion) and transmits $z_j[k+1]$ and $y_j[k+1]$. Then it sets $y_j[k+1] = 0$ and $z_j[k+1] = 0$.
\\ 3) It repeats (increases $k$ to $k + 1$ and goes back to Step~1).
\label{algorithm1}
\end{varalgorithm}



We now analyze the functionality of the distributed algorithm and prove that it allows all agents to reach quantized average consensus after a finite number of steps. Depending on the graph structure and the initial mass variables of each node, we have the following two possible scenarios:
\begin{enumerate}
\item[A.] Full Mass Summation (i.e., there exists $k'_0 \in \mathbb{Z}_+$ where we have $y_j[k'_0] =\sum_{l=1}^{n}{y_l[0]}  \ \ \text{and} \ \ z_j[k'_0] = n$, for some node $v_j \in \mathcal{V}$, and $y_i[k'_0] = 0  \ \ \text{and} \ \ z_i[k'_0] = 0$, for each $ v_i \in \mathcal{V} - \{ v_j \}$). 
In this scenario (\ref{alpha_z_y}) and (\ref{alpha_q}) hold for each node $v_j$ for the case where $\alpha = 1$. 
\item[B.] Partial Mass Summation (i.e., there exists $k'_0 \in \mathbb{Z}_+$ so that for every $k \geq k'_0$ there exists a set $\mathcal{V}^p[k] \subseteq \mathcal{V}$ in which we have $y_j[k] = y_i[k]$ and $z_j[k] = z_i[k]$, $\forall v_j, v_i \in \mathcal{V}^p[k]$ and $y_l[k] = 0  \ \ \text{and} \ \ z_l[k] = 0$, for each $ v_l \in \mathcal{V} - \mathcal{V}^p[k]$). 
In this scenario, (\ref{alpha_z_y}) and (\ref{alpha_q}) hold for each node $v_j$ for the case where $\alpha = | \mathcal{V}^p[k] |$.
\end{enumerate}


An example regarding the scenario of ``Partial Mass Summation'' is given below. 

\begin{example}

Consider the strongly connected digraph $\mathcal{G}_d = (\mathcal{V}, \mathcal{E})$ shown in Fig.~\ref{partial_example}, with $\mathcal{V} = \{ v_1, v_2, v_3, v_4 \}$ and $\mathcal{E} = \{ m_{21}, m_{32}, m_{43}, m_{14} \}$, where each node has an initial quantized value $y_1[0] = 9$, $y_2[0] = 3$, $y_3[0] = 9$, and $y_4[0] = 3$, respectively. 
We have that the average of the initial values of the nodes, is equal to $q = \frac{24}{4} = 6$. 

\begin{figure}[h]
\begin{center}
\includegraphics[width=0.25\columnwidth]{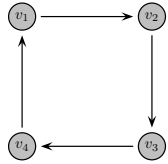}
\caption{Example of digraph for partial mass summation during the operation of Algorithm~\ref{algorithm1}.}
\label{partial_example}
\end{center}
\end{figure}

At time step $k=0$ the initial mass and state variables for nodes $v_1, v_2, v_3, v_4$ are shown in Table~\ref{table_det}. 

\begin{center}
\captionof{table}{Initial Mass and State Variables for Fig.~\ref{partial_example}}
\label{table_det}
{\small 
\begin{tabular}{|c||c|c|c|c|c|}
\hline
Node &\multicolumn{5}{c|}{Mass and State Variables for $k=0$}\\
 &$y_j[0]$&$z_j[0]$&$y^s_j[0]$&$z^s_j[0]$&$q^s_j[0]$\\
\cline{1-6}
 &  &  &  &  & \\
$v_1$ & 9 & 1 & 9 & 1 & 9 / 1\\
$v_2$ & 3 & 1 & 3 & 1 & 3 / 1\\
$v_3$ & 9 & 1 & 9 & 1 & 9 / 1\\
$v_4$ & 3 & 1 & 3 & 1 & 3 / 1\\
\hline
\end{tabular}
}
\end{center}
\vspace{0.2cm}

\noindent
Then, during time step $k=0$, every node $v_j$ will transmit its mass variables $y_j[0]$ and $z_j[0]$ (since the ``Event Trigger Conditions'' hold for every node). 
The mass and state variables of every node at $k=1$ are shown in Table~\ref{table_det_1}.

\noindent
It is important to notice here that, at time step $k=1$, nodes $v_1$ and $v_3$ have mass variables equal to $y_1[1] = 3$, $z_1[1] = 1$ and $y_3[1] = 3$, $z_3[1] = 1$ but the corresponding state variables are equal to $y^s_1[1] = 9$, $z^s_1[1] = 1$ and $y^s_3[1] = 9$, $z^s_3[1] = 1$. 
This means that at time step $k=1$, the ``Event Trigger Conditions'' do \textit{not} hold for nodes $v_1$ and $v_3$; thus, these nodes will not transmit their mass variables (i.e., they will not execute Steps~$2a$ and $2b$ of Algorithm~\ref{algorithm1}). 
The mass and state variables of every node at $k=2$ are shown in Table~\ref{table_det_2}.

During time step $k=2$ we can see that the ``Event Trigger Conditions'' hold for nodes $v_1$ and $v_3$ which means that they will transmit their mass variables towards nodes $v_2$ and $v_4$ respectively. 
The mass and state variables of every node for $k=3$ are shown in Table~\ref{table_det_3}.

\begin{center}
\captionof{table}{Mass and State Variables for Fig.~\ref{partial_example} for $k=1$}
\label{table_det_1}
{\small 
\begin{tabular}{|c||c|c|c|c|c|}
\hline
Node &\multicolumn{5}{c|}{Mass and State Variables for $k=1$}\\
 &$y_j[1]$&$z_j[1]$&$y^s_j[1]$&$z^s_j[1]$&$q^s_j[1]$\\
\cline{1-6}
 &  &  &  &  & \\
$v_1$ & 3 & 1 & 9 & 1 & 9 / 1\\
$v_2$ & 9 & 1 & 9 & 1 & 9 / 1\\
$v_3$ & 3 & 1 & 9 & 1 & 9 / 1\\
$v_4$ & 9 & 1 & 9 & 1 & 9 / 1\\
\hline
\end{tabular}
}
\end{center}
\vspace{0.4cm}

\begin{center}
\captionof{table}{Mass and State Variables for Fig.~\ref{partial_example} for $k=2$} 
\label{table_det_2}
{\small 
\begin{tabular}{|c||c|c|c|c|c|}
\hline
Node &\multicolumn{5}{c|}{Mass and State Variables for $k=2$}\\
 &$y_j[2]$&$z_j[2]$&$y^s_j[2]$&$z^s_j[2]$&$q^s_j[2]$\\
\cline{1-6}
 &  &  &  &  & \\
$v_1$ & 12 & 2 & 12 & 2 & 12 / 2\\
$v_2$ & 0 & 0 & 9 & 1 & 9 / 1\\
$v_3$ & 12 & 2 & 12 & 2 & 12 / 2\\
$v_4$ & 0 & 0 & 9 & 1 & 9 / 1\\
\hline
\end{tabular}
}
\end{center}
\vspace{0.2cm}

\begin{center}
\captionof{table}{Mass and State Variables for Fig.~\ref{partial_example} for $k=3$}
\label{table_det_3}
{\small 
\begin{tabular}{|c||c|c|c|c|c|}
\hline
Node &\multicolumn{5}{c|}{Mass and State Variables for $k=3$}\\
 &$y_j[3]$&$z_j[3]$&$y^s_j[3]$&$z^s_j[3]$&$q^s_j[3]$\\
\cline{1-6}
 &  &  &  &  & \\
$v_1$ & 0 & 0 & 12 & 2 & 12 / 2\\
$v_2$ & 12 & 2 & 12 & 2 & 12 / 2\\
$v_3$ & 0 & 0 & 12 & 2 & 12 / 2\\
$v_4$ & 12 & 2 & 12 & 2 & 12 / 2\\
\hline
\end{tabular}
}
\end{center}
\vspace{0.2cm}

Following the algorithm operation we have that, for $k=3$, the ``Event Trigger Conditions'' hold for nodes $v_2$ and $v_4$, which means that they will transmit their masses to nodes $v_1$ and $v_3$ respectively. 
As a result we have, for $k=4$, that the mass variables for nodes $v_1$ and $v_3$ are $y_1[4] = y_4[3] = 12$, $z_1[4] = z_4[3] = 2$ and $y_3[4] = y_2[3] = 12$, $z_3[4] = z_2[3] = 2$ respectively. 
Then, during time step $k=4$, we have that the ``Event Trigger Conditions'' hold for nodes $v_1$ and $v_3$ which means that they will transmit their mass variables to nodes $v_1$ and $v_3$. 
We can easily notice that, during the execution of Algorithm~\ref{algorithm1} for $k \geq 3$, we have $\mathcal{V}^p[k] = \mathcal{V}^p[k + 2]$ (where $\mathcal{V}^p[3] = \{ v_2, v_4 \}$ and $\mathcal{V}^p[4] = \{ v_1, v_3 \}$), which means that the exchange of mass variables between the nodes will follow a \textit{periodic} behavior and the mass variables will never ``merge'' in one node (i.e., $\nexists k'_0$ for which $y_j[k'_0] =\sum_{l=1}^{4}{y_l[0]}  \ \ \text{and} \ \ z_j[k'_0] = 4$, for some node $v_j \in \mathcal{V}$, and $y_i[k'_0] = 0  \ \ \text{and} \ \ z_i[k'_0] = 0$, for each $ v_i \in \mathcal{V} - \{ v_j \}$).

Nevertheless, after a finite number of steps, every node $v_j$ obtains a quantized fraction $q_j^s$ which is equal to the average $q$ of the initial values of the nodes. 
From Table~\ref{table_det_3}, we can see that for $k \geq 3$ it holds that 
$$
q_j^s[k] = q = \frac{24 / \alpha}{4 / \alpha} ,
$$
for every $v_j \in \mathcal{V}$, for $\alpha = | \mathcal{V}^p[k] | = 2$. 
\hspace*{\fill} $\square$
\end{example}

\begin{remark}
Note that the periodic behavior in the above graph is not only a function of the graph structure but also of the initial conditions. 
Also note that, in general, the priorities will also play a role because they determine the order in which nodes transmit to their out-neighbors (in the example, priorities do not come into play because each node has exactly one out-neighbor).
\end{remark}

\subsection{Deterministic Convergence Analysis}

For the development of the necessary results regarding the operation of Algorithm~\ref{algorithm1} let us consider the following setup. 

{\it Setup:} Consider a strongly connected digraph $\mathcal{G}_d = (\mathcal{V}, \mathcal{E})$ with $n=|\mathcal{V}|$ nodes and $m=|\mathcal{E}|$ edges. 
During the execution of Algorithm~\ref{algorithm1}, at time step $k_0$, there is at least one node $v_{j'} \in \mathcal{V}$, for which 
\begin{equation}\label{great_z_prop1_det}
z_{j'}[k_0] \geq z_i[k_0], \ \forall v_i \in \mathcal{V}.
\end{equation}
Then, among the nodes $v_{j'}$ for which (\ref{great_z_prop1_det}) holds, there is at least one node $v_j$ for which 
\begin{equation}\label{great_z_prop2_det}
y_j[k_0] \geq y_{j'}[k_0] , \ v_j, v_{j'} \in \{ v_i \in \mathcal{V} \ | \ (\ref{great_z_prop1_det}) \ \text{holds} \}.
\end{equation}
For notational convenience we will call the mass variables of node $v_j$ for which (\ref{great_z_prop1_det}) and (\ref{great_z_prop2_det}) hold as the ``leading mass'' (or ``leading masses'').

Before showing that Algorithm~\ref{algorithm1} allows each node to reach quantized average consensus after a finite number of steps, we present the following lemma, which is helpful in the development of our results.

\begin{lemma}\label{first_lemma}
Under the above {\it Setup}, the ``leading mass'' or ``leading masses'' at time step $k$, (which may be held by different nodes at different time steps) will always fulfill the ``Event Trigger Conditions'' (Step~$2$ of Algorithm~\ref{algorithm1}). 
This means that the mass variables of node $v_j$ for which (\ref{great_z_prop1_det}) and (\ref{great_z_prop2_det}) hold at time step $k_0$ will be transmitted (at time step $k_0$) by $v_j$ to an out-neighbor $v_l \in \mathcal{N}_j^+$. 
\end{lemma}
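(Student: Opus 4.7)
My plan is to prove the lemma via a monotonicity invariant on the lexicographically largest mass pair present in the network. Combined with the fact that any node's state pair $(z^s_j[k], y^s_j[k])$ was assigned from a mass pair at an earlier iteration, this will show that the leading node's current mass pair dominates its state pair in lex order, which is precisely the disjunction of the two Event Trigger Conditions.

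For each iteration $k$, define $M[k]$ to be the lex-maximum of the pairs $(z_i,y_i)$ over all nodes at the moment Step~2 of iteration $k$ tests its conditions, comparing $z$-coordinates first and $y$-coordinates to break ties. The first step of the proof is to show that $M$ is non-decreasing in lex order as $k$ grows. Let $v_m$ realize $M[k]$. If $v_m$ does not transmit at iteration $k$'s Step~3, its pair is carried unchanged into iteration $k{+}1$, where any Step~1 reception can only enlarge it in lex order; thus some node still realizes at least $M[k]$ at the next snapshot. If $v_m$ does transmit, the full pair $M[k]$ is delivered to its chosen out-neighbor $u_m$, who absorbs it at iteration $k{+}1$'s Step~1. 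Here I would invoke the auxiliary observation that the mass variables always satisfy $z_i = 0 \Rightarrow y_i = 0$, inherited from the initialization and preserved because $z$ and $y$ are jointly reset upon transmission and jointly augmented upon reception. Consequently, $u_m$'s new pair is either exactly $M[k]$ (if $u_m$ had no residual mass and received nothing else in that Step~1) or its $z$-coordinate strictly exceeds $M_z[k]$ (if it accumulates extra incoming mass), and in both cases $M[k{+}1] \geq_{\mathrm{lex}} M[k]$.

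The second step argues that $(z^s_j[k], y^s_j[k]) \leq_{\mathrm{lex}} M[k]$ for every node $v_j$ and every $k$. Indeed, the state pair was last set at some iteration $k' \leq k-1$ at which Step~2a fired for $v_j$, when $(z^s_j, y^s_j)$ was assigned the values of $(z_j[k'{+}1], y_j[k'{+}1])$; by definition of $M$ that pair is $\leq_{\mathrm{lex}} M[k']$, and by the monotonicity just proved, $M[k'] \leq_{\mathrm{lex}} M[k]$. The base case at initialization is immediate from $z^s_j[0] = 1$ and $y^s_j[0] = y_j[0]$, which equals $(z_j[0], y_j[0])$ before the first transmission. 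Now combining the two pieces at time $k_0$ for the leading node $v_j$ identified by (\ref{great_z_prop1_det})--(\ref{great_z_prop2_det}), its current mass pair realizes $M[k_0]$, so
\[ (z^s_j[k_0], y^s_j[k_0]) \leq_{\mathrm{lex}} M[k_0] = (z_j[k_0{+}1], y_j[k_0{+}1]). \]
Unpacking the lex inequality yields exactly Condition~1 ($z_j[k_0{+}1] > z^s_j[k_0]$) or Condition~2 ($z_j[k_0{+}1] = z^s_j[k_0]$ and $y_j[k_0{+}1] \geq y^s_j[k_0]$), so $v_j$'s Event Trigger fires and Step~2b transmits the mass to an out-neighbor $v_l \in \mathcal{N}_j^+$ according to the round-robin order.

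The hardest part will be the monotonicity invariant, particularly the transmission case when the receiving out-neighbor $u_m$ can itself be a transmitter in the same iteration, may absorb several simultaneous incoming messages, or may already hold residual mass. Careful case analysis is needed to show that whenever the $z$-coordinate of $M$ does not strictly grow across a transfer, the $y$-coordinate is faithfully propagated; this is exactly where the $z_i = 0 \Rightarrow y_i = 0$ property becomes essential, since otherwise a receiver with a zero $z$ but spurious nonzero $y$ could corrupt the tie-breaking claim.
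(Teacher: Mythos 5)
Your proof is correct and follows essentially the same route as the paper's: both arguments rest on the two facts that (i) every state pair $(z^s_i, y^s_i)$ is a snapshot of some past mass pair, and (ii) the lexicographically largest mass pair is non-decreasing over time because nonzero masses either persist or merge into strictly larger ones. The paper packages this as a proof by contradiction while you state the monotone invariant directly (and are somewhat more explicit about the merging case analysis and the $z_i=0\Rightarrow y_i=0$ property), but the underlying reasoning is the same.
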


\begin{proof}
Let us suppose that, at time step $k_0$, (\ref{great_z_prop1_det}) and (\ref{great_z_prop2_det}) hold for the mass variables of node $v_j$ (i.e., it is the ``leading mass''). 
We will show that, during time step $k_0$, the state variables $z_i^s[k_0]$ and $y_i^s[k_0]$ of every node $v_i \in \mathcal{V}$, satisfy one of the following:
\begin{enumerate}
\item $z_i^s[k_0] < z_j[k_0]$ or, 
\item $z_i^s[k_0] = z_j[k_0]$ and $y_i^s[k_0] \leq y_j[k_0]$, 
\end{enumerate}
which means that the mass variables $z_j[k_0]$ and $y_j[k_0]$ of node $v_j \in \mathcal{V}$ will fulfill the ``Event Trigger Conditions'' at time step $k_0$ in Step~$2$ of Algorithm~\ref{algorithm1} at time step $k_0$. 

By contradiction let us suppose that, at time step $k_0$, there exists a node $v_i \in \mathcal{V}$ for which one of the following holds:
\begin{enumerate}
\item $z_i^s[k_0] > z_j[k_0]$ or, 
\item $z_i^s[k_0] = z_j[k_0]$ and $y_i^s[k_0] > y_j[k_0]$, 
\end{enumerate}
while the mass variables of node $v_j$ are the ``leading mass''. 
From Step~$2$ of Algorithm~\ref{algorithm1}, we have that, if the ``Event Trigger Conditions'' hold then each node $v_i$ sets its state variables equal to its mass variables. 
This means that at some past time step $k'_0$ ($k'_0 \leq k_0$), there was a node $v_{l'}$ such that $z_{l'}[k'_0] = z_i^s[k'_0]$ and $y_{l'}[k'_0] = y_i^s[k'_0]$; furthermore, since nonzero masses (like the mass held by node $v_{l'}$) can only remain the same or be merged with other masses, we know that there exists a node $v_l \in \mathcal{V}$ for which, at time step $k_0$, we have one of the following:
\begin{enumerate}
\item $z_l[k_0] = z_i^s[k_0]$ and $y_l[k_0] \geq y_i^s[k_0]$ or, 
\item $z_l[k_0] > z_i^s[k_0]$. 
\end{enumerate}
However, this also means that 
\begin{enumerate}
\item $z_l[k_0] = z_j[k_0]$ and $y_l[k_0] > y_j[k_0]$ or, 
\item $z_l[k_0] > z_j[k_0]$, 
\end{enumerate}
which is a contradiction because (\ref{great_z_prop1_det}) and (\ref{great_z_prop2_det}) do {\it not} hold for the mass variables of node $v_j$ (i.e., it is not the ``leading mass''). 

As a result we have that the ``leading mass'' will always fulfill the ``Event Trigger Conditions'' (Step~$2$ of Algorithm~\ref{algorithm1}); thus, it will always be transmitted to an out-neighbor of the node it is held by. 
\end{proof}

\begin{prop}
\label{PROP1_det}
Under the above {\it Setup} we have that the execution of Algorithm~\ref{algorithm1} allows each node $v_j \in \mathcal{V}$ to reach quantized average consensus after a finite number of steps, bounded by $O(nm^2)$. 
\end{prop}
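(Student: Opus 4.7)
My plan is to track the ``leading mass'' via a potential function and to exploit Lemma~\ref{first_lemma}, which guarantees that this mass is transmitted at every step and therefore performs a deterministic walk through $\mathcal{G}_d$ dictated by the round-robin pointers $P_{lj}$. I would introduce $M^s[k] := \max_{v_j \in \mathcal{V}} z^s_j[k]$, which is non-decreasing because Step~$2a$ of Algorithm~\ref{algorithm1} updates state variables only upward in lex order, and is bounded above by $n$ because the column-stochasticity of $W[k]$ in (\ref{z_exch}) preserves $\sum_j z_j[k] = \sum_j z_j[0] = n$. Consequently $M^s$ can strictly increase at most $n-1$ times, and each strict increase corresponds to a \emph{merge event}: the leading mass, on arriving at a node $v_l$ that already holds a nonzero mass, combines with it to produce $z_l[k+1] > z^s_l[k]$, thereby firing the event trigger.

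The principal obstacle is bounding, by $O(m^2)$, the number of steps between two consecutive merge events. During such an interval the leading-mass value is constant and the mass moves each step along the round-robin outgoing edge of its current host. I would argue that it cannot evade the other nonzero masses for long: each node $v_i$ has at most $\mathcal{D}_i^+$ outgoing edges, so after $v_i$ has hosted the leading mass that many times its round-robin pointer has cycled through every possible outgoing direction; combined with strong connectivity (any target node is at most $n-1 \leq m$ hops away), a pigeonhole argument over the joint (current node, round-robin offset) configurations that avoid a fixed target node forces a merge within $O(m^2)$ steps. A technical delicacy is accounting for the concurrent motion of the non-leading masses, which transmit whenever their own event trigger fires but remain ``stuck'' when their current $(z_j,y_j)$ is strictly below their stored state variable $(z^s_j,y^s_j)$ in lex order, so that the leading mass always eventually overtakes them.

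Combining the two estimates yields a worst-case total of $O(nm^2)$ steps until either Full Mass Summation ($\alpha = 1$) or a periodic Partial Mass Summation regime is reached. In the latter case the remaining nonzero masses carry identical $(y,z)$ values by construction of the merges, and mass conservation gives $\alpha \cdot y = \sum_l y_l[0]$ together with $\alpha \cdot z = n$, where $\alpha = |\mathcal{V}^p[k]|$, so each equal mass satisfies $y/z = q$ as demanded by (\ref{alpha_z_y})--(\ref{alpha_q}). A final propagation phase (absorbed within the $O(nm^2)$ bound) ensures that every node's state variables are overwritten at least once by a copy of this ratio, completing the argument.
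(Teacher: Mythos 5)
Your architecture is essentially the paper's: Lemma~\ref{first_lemma} keeps a leading mass in perpetual round-robin motion, the $m^2$ edge-traversal bound (Proposition~3 of \cite{2014:RikosHadj}) controls the time between mergings, at most $n-1$ mergings can occur, and the residual all-masses-equal configuration is exactly the Partial Mass Summation regime in which each surviving pair already equals $\big(\sum_l y_l[0]/\alpha,\, n/\alpha\big)$. However, the central step --- forcing a merging within $O(m^2)$ steps --- is not closed as you state it. Your pigeonhole argument has the leading mass corner ``a fixed target node,'' but the follower masses are themselves moving, so there is no fixed target until you first explain why they stop; your remark that followers ``remain stuck'' when below their node's state variables presupposes that those state variables have already been raised, which is circular. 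The repair is a two-phase argument in a specific order: in a first window of $m^2$ steps an unmerged, unobstructed leading mass traverses every edge and thereby sets every node's $(z^s_i,y^s_i)$ to at least the leading value; \emph{only after that} does every strictly smaller follower mass fail the Event Trigger Conditions wherever it sits and freeze in place; a second window of $m^2$ steps then lets the leading mass land on the node hosting a frozen follower and merge. One must also dispose of the cases where the leading mass is itself merged or obstructed inside the window --- both already witness a merging --- and of the case where no follower freezes, which forces all nonzero masses to be equal and consensus to be already attained. The inter-merge bound is thus $2m^2$, still giving $O(nm^2)$.

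A second, smaller defect: the potential $M^s[k]=\max_j z^s_j[k]$ does not count mergings. Two small masses can merge into one whose denominator is still below the current maximum $z^s$ (e.g., two unit masses merging to $z=2$ while some node already holds $z^s=3$), so a merging need not strictly increase $M^s$. The correct counter is the number of nodes with nonzero mass, which starts at $n$ and strictly decreases at each merging by the monotonicity argument of Proposition~\ref{PROP1_prob}; this yields the bound of $n-1$ mergings directly and makes the potential function unnecessary.
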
 

\begin{proof}
According to Lemma~\ref{first_lemma}, we have that the ``leading mass'' (which may be held by different nodes at different time steps) will always fulfill the ``Event Trigger Conditions'' (Step~$2$ of Algorithm~\ref{algorithm1}). 
This means that the mass variables of node $v_j$ for which (\ref{great_z_prop1_det}) and (\ref{great_z_prop2_det}) hold, at time step $k$, will always be transmitted by $v_j$ to an out-neighbor $v_l \in \mathcal{N}_j^+$ at time step $k$.

Let us assume that, at time step $k_0$, the mass variables of node $v_j$ are the ``leading mass'' and there exists a set of nodes $\mathcal{V}^f [k_0] \subseteq \mathcal{V}$ which is defined as $\mathcal{V}^f [k_0] = \{ v_i \in \mathcal{V} \ | \ z_i[k_0] > 0 \ \text{but} \ (\ref{great_z_prop1_det}) \ \text{or} \ (\ref{great_z_prop2_det}) \ \text{do not hold} \}$ (i.e., it is the set of nodes which have nonzero mass variables at time step $k_0$ but they are not ``leading masses'').
Note here that if the ``leading mass'' reaches a node simultaneously with some other (leading or otherwise mass) then it gets ``merged'', i.e., the receiving node ``merges'' the mass variables it receives, by summing their numerators and their denominators, creating a set of mass variables with a greater denominator (if necessary, the receiving node updates its state variables to be equal to these merged variables and then propagates them to an out-neighbor). 
Furthermore, we will say that a leading mass, at time step $k_0$, gets ``obstructed'' if it reaches, at time step $k_0 + 1$, a node whose state variables are greater than the mass variables (i.e., either the denominator of the node's state variables is greater than the denominator of the mass variables, or, if the denominators are equal, the numerator of the state variables is greater than the numerator of the mass variables). 
Note that if the ``leading mass'' gets ``obstructed'' then its no longer the ``leading mass'' (there is a new leading mass held by some node in the network).

Suppose that the leading mass at time step $k_0$ is held by node $v_j$ and is given by $y_j[k_0]$ and $z_j[k_0]$. 
If this leading mass does not get merged or obstructed, during the execution of Algorithm~\ref{algorithm1}, it will reach every node $v_j \in \mathcal{V}$ in at most $m^2$ steps, where $m = | \mathcal{E} |$ is the number of edges of the given digraph $\mathcal{G}$ (this follows from Proposition~$3$ in \cite{2014:RikosHadj}, which actually provides a bound for an unobstructed ``leading mass'' to travel via each edge in the graph and thus necessarily also reach every other node\footnote{In Proposition~3 in \cite{2014:RikosHadj} the authors show that the number of iterations required for a packet, which is transmitted between nodes in a round-robin fashion over a directed topology, to reach every node in the network is bounded by $m^2$, where $m$ is the number of edges in the network.}). 
This means that, after executing Algorithm~\ref{algorithm1} for $m^2$ steps, we have $z_i^s[k_0 + m^2] \geq z_j[k_0]$, for every node $v_i \in \mathcal{V}$ (i.e., after $m^2$ steps every node will have its state variable $z^s$ equal or larger than the ``leading mass''). 
Thus, at time step $k_0 + m^2$, if there is any node $v_i \in \mathcal{V}$ (for which we necessarily have $z_i^s[k_0 + m^2] \geq z_j[k_0]$) that belongs in $\mathcal{V}^f [k_0 + m^2]$, this node will perform no transmission (i.e., its mass variables will ``get obstructed'') unless the event triggered conditions hold.

Starting at time step $k_0$, during the execution of Algorithm~\ref{algorithm1}, we examine what happens in the next $m^2$ time steps by considering the following scenarios:

\noindent
A) If in the next $m^2$ time steps the ``leading mass'' gets ``merged'' then we have that at least one ``merging'' occurred (i.e., two nonegative mass variables simultaneously reached a common node in this time window of length $m^2$).

\noindent
B) If the ``leading mass'' at time step $k_0$ gets ``obstructed'', it means that it reached a node (say $v_i$) which performed no transmissions.
For this to happen, then node $v_i$ (or some other node) has ``merged'', at some earlier point, a set of mass variables so that the resulting mass is larger than the ``leading mass'' (i.e., either the denominator of the node's state variables is greater than the denominator of the ``leading mass'', or, if the denominators are equal, the numerator of the other mass variables is greater than the numerator of the ``leading mass''). 
However, this means that there was at least one ``merging'' in these $m^2$ steps.

\noindent
C) If the ``leading mass'' (or ``leading masses'') does not get ``obstructed'' or ``merged'' during the next $m^2$ time steps, we have $z_i^s[k_0 + m^2] \geq z_j[k_0]$, for every node $v_i \in \mathcal{V}$ (i.e., after $m^2$ steps every node will have its state variable $z^s$ equal or larger than the ``leading mass'').
For this scenario we have the two cases below:

\noindent
i) There is at least one node $v_i$ which has nonzero mass variable $z_i[k_0 + m^2]$, and for which the ``Event Trigger Conditions'' do not hold. 
This node will perform no transmission and, since the ``leading mass'' will visit every node $v_j \in \mathcal{V}$ in the subsequent $m^2$ time steps, we have that the ``leading mass'' will visit also this particular node it will ``merge'' with its nonzero mass variables.

\noindent
ii) All nodes have equal mass variables. 
This means that there exists a set of nodes $\mathcal{V}^l [k_0 + m^2] \subseteq \mathcal{V}$ which is defined as $\mathcal{V}^l [k_0 + m^2] = \{ v_i \in \mathcal{V} \ | \ z_i[k_0 + m^2] > 0 \ \text{and} \ (\ref{great_z_prop1_det}) \ \text{and} \ (\ref{great_z_prop2_det}) \ \text{hold} \}$ (i.e., it is the set of nodes which have nonzero mass variables at time step $k_0 + m^2$ and they are ``leading masses''). 
Note here that for every $v_i, v_l \in \mathcal{V}^l [k_0 + m^2]$ we have $y_i[k_0 + m^2] = y_l[k_0 + m^2]$ and $z_i[k_0 + m^2] = z_l[k_0 + m^2]$. 
Note also that it is possible that these leading masses never merge; however, in such case, the nodes have already reached average consensus since, for every $v_i \in \mathcal{V}^l [k_0]$, we have that 
$$
y_i[k_0] =  \frac{\sum_{l=1}^{n}{y_l[0]}}{\alpha}
$$ 
and 
$$
z_i[k_0] = \dfrac{n}{\alpha}
$$
where $\alpha = |\mathcal{V}^l [k_0]|$ 
which means that 
$$
q_i^s[k_0] = \frac{\sum_{l=1}^{n}{y_l[0]}}{n}. 
$$
Even when there are mergings of these mass variables later on, the average will not change. Since the maximum number of mergings is at most $n - 1$, we have that after $O(nm^2)$ iterations the nodes will be able to calculate the average of their initial values.
\end{proof}

The ``Event Trigger Conditions'' in Algorithm~\ref{algorithm1} allowed the calculation of an upper bound on the number of iterations required for every node $v_j \in \mathcal{V}$ to reach quantized average consensus. 
However, from Lemma~\ref{first_lemma} we have that the ``leading mass'' will always fulfill the ``Event Trigger Conditions'' in Step~$2$ of Algorithm~\ref{algorithm1}.
This means that the ``leading mass'' (or leading masses) will continue being transmitted from each node towards its out-neighbors even though quantized average consensus has already been reached. 
The distributed algorithm presented in the following section aims to address this issue by invoking multiple sets of ``Event Trigger Conditions'', which allow the nodes to cease transmissions once quantized average consensus has been reached.

%
%
%
%
\section{EVENT-TRIGGERED QUANTIZED AVERAGING ALGORITHM WITH MINIMUM MASS SUMMATION}\label{MaxAlgorithm}

Motivated by the need to reduce energy consumption, communication bandwidth, network congestion, and/or processor usage, many researchers have considered the use of event-triggered communication and control \cite{2013:Dimarogonas_Johansson, 2016:nowzari_cortes}. 
In this section we extend Algorithm~\ref{algorithm1} so that, once quantized average consensus is reached, all transmissions are ceased. 
The main idea is to maintain a separate mechanism for broadcasting the state variables, $y^s$ and $z^s$, of each node (as long as they satisfy certain event trigger conditions). 
This way, nodes learn the average but also have a way to decide when (or not) to transmit.

\subsection{Event-Triggered Distributed Algorithm with Minimum Mass Summation}

The details of the proposed distributed algorithm with transmission stopping capabilities can be seen in Algorithm~\ref{algorithm_max} below. 
Here we focus on the event triggering rules that are used to determine when to transmit state variables and/or mass variables.

\noindent
\textbf{Initialization:}
Initialization is as in Algorithm~\ref{algorithm1}, except that after initializing its state variables, each node $v_j$ broadcasts its state variables $z^s_j[0]$ and $y^s_j[0]$ to every out-neighbor $v_l \in \mathcal{N}_j^+$.

\noindent
\textbf{Iteration:} The iteration involves the following steps.

\noindent
\textbf{Step 1.} Each node $v_j$ receives $y^s_i[k]$ and $z^s_i[k]$ from its in-neighbors $v_i \in \mathcal{N}_j^-$ (where $y^s_i[k] = 0$ and $z^s_i[k] = 0$ if no message is received).

\noindent
\textbf{Step 2.} Event-Trigger Conditions~$1$: Each node $v_j$, checks the following conditions for every $v_i \in \mathcal{N}_j^-$:
\begin{enumerate}
\item It checks whether $z^s_i[k]$ is greater than $z^s_j[k]$.
\item If $z^s_i[k]$ is equal to $z^s_j[k]$, it checks whether $y^s_i[k]$ is greater than $y^s_j[k]$.
\end{enumerate}
If one of the above two conditions holds, it sets 
\begin{eqnarray}\label{priorities_max}
z^s_j[k+1] = & \max_{v_i \in \mathcal{N}_j^-} z^s_i[k], & \text{and} \nonumber \\
y^s_j[k+1] = & \max_{v_i \in \{v_{i'} \in \mathcal{N}_j^- | z^s_{i'}[k] = z^s_j[k+1]\}} y^s_i[k],  &  \nonumber \;
\end{eqnarray}
which means $q^s_j[k+1] = \frac{y^s_j[k+1]}{z^s_j[k+1]}$. 
Then it broadcasts $z^s_j[k+1]$ and $y^s_j[k+1]$ to every out-neighbor $v_l \in \mathcal{N}_j^+$.

\noindent
\textbf{Step 3.} Event-Trigger Conditions~$2$: Each node $v_j$ checks the following conditions:
\begin{enumerate}
\item It checks whether $z_j[k]$ is lower than $z^s_j[k+1]$.
\item If $z_j[k]$ is equal to $z^s_j[k+1]$, it checks whether $y_j[k]$ is lower than $y^s_j[k+1]$.
\end{enumerate}
If one of the above two conditions holds, it chooses an out-neighbor $v_l \in \mathcal{N}_j^+$ according to the order $P_{lj}$ (in a round-robin fashion) and transmits $y_j[k]$ and $z_j[k]$. 
Note that no transmission is necessary if $z_j[k]=0$ (which means that $y_j[0] = 0$). 
Then, since it transmitted its stored mass, it sets $y_j[k] = 0$, $z_j[k] = 0$.

\noindent
\textbf{Step 4.} Each node $v_j$ receives messages $y_i[k]$ and $z_i[k]$ from its in-neighbors $v_i \in \mathcal{N}_j^-$ and sums them along with its stored messages $y_j[k]$ and $z_j[k]$ to obtain
$$
y_j[k+1] = y_j[k] + \sum_{v_i \in \mathcal{N}_j^-} w_{ji}[k]y_i[k] ,
$$
and 
$$
z_j[k+1] = z_j[k] + \sum_{v_i \in \mathcal{N}_j^-} w_{ji}[k]z_i[k] ,
$$
where $w_{ji}[k] = 0$ if no message is received from in-neighbor $v_i \in \mathcal{N}_j^-$; otherwise $w_{ji}[k] = 1$.

\noindent
\textbf{Step 5.} Event-Trigger Conditions~$3$: Each node $v_j$ checks the following conditions:
\begin{enumerate}
\item It checks whether $z_j[k+1]$ is greater than $z^s_j[k+1]$.
\item If $z_j[k+1]$ is equal to $z^s_j[k+1]$, it checks whether $y_j[k+1]$ is greater than $y^s_j[k+1]$.
\end{enumerate}
If one of the above two conditions holds, it sets $z^s_j[k+1] = z_j[k+1]$ and $y^s_j[k+1] = y_j[k+1]$. Then it broadcasts $z^s_j[k+1]$ and $y^s_j[k+1]$ to every out-neighbor $v_l \in \mathcal{N}_j^+$.

\noindent
Finally, it sets $q^s_j[k+1] = \frac{y^s_j[k+1]}{z^s_j[k+1]}$. Then, $k$ is set to $k+1$ and the iteration repeats (it goes back to Step~1 of the iterative process).

\begin{remark}
Notice here that each node $v_j$, during time step $k$, performs two types of transmission, towards its out-neighbors $v_l \in \mathcal{N}_j^+$: either via broadcasting (to all of its out-neighbors) of its state variables $y^s_j[k]$ and $z^s_j[k]$ (if ``Event Trigger Conditions~$1$ and $3$'' hold) or via transmission of its mass variables $y_j[k]$ and $z_j[k]$ to a single out-neighbor, chosen according to the predetermined order $P_{lj}$ (if ``Event Trigger Conditions~$2$'' hold). 
The event trigger conditions effectively imply that no transmission is performed if no set of conditions holds in Steps~$2$, $3$ and $5$. 
\end{remark}

\noindent
\vspace{-0.5cm}    
\begin{varalgorithm}{3}
\caption{Deterministic Quantized Average Consensus with Minimum Mass Summation}
\textbf{Input} 
\\ 1) A strongly connected digraph $\mathcal{G}_d = (\mathcal{V}, \mathcal{E})$ with $n=|\mathcal{V}|$ nodes and $m=|\mathcal{E}|$ edges. 
\\ 2) For every $v_j$ we have $y_j[0] \in \mathbb{Z}$. 
\\
\textbf{Initialization} 
\\ Every node $v_j \in \mathcal{V}$ does the following:
\\ 1) It assigns to each of its outgoing edges $v_l \in \mathcal{N}^+_j$ a \textit{unique order} $P_{lj}$ in the set $\{0,1,..., \mathcal{D}_j^+ -1\}$.
\\ 2) It sets $z_j[0] = 1$, $z^s_j[0] = 1$ and $y^s_j[0] = y_j[0]$ (which means that $q^s_j[0] = y_j[0] / 1$). 
\\ 3) It broadcasts its state variables $z^s_j[0]$ and $y^s_j[0]$ to every out-neighbor $v_l \in \mathcal{N}_j^+$.
\\
\textbf{Iteration}
\\ For $k=0,1,2,\dots$, each node $v_j \in \mathcal{V}$ does the following: 
\\ 1) It receives $y^s_i[k]$ and $z^s_i[k]$ from its in-neighbors $v_i \in \mathcal{N}_j^-$ 
(where $y^s_i[k] = 0$ and $z^s_i[k] = 0$ if no message is received). 
\\ 2) \underline{Event Trigger Conditions 1:} Node $v_j$ checks if there exists $v_i \in \mathcal{N}_j^-$ for which one of the following two conditions hold:
\\ Condition~$(i)$: $z^s_i[k] > z^s_j[k]$.
\\ Condition~$(ii)$: $z^s_i[k] = z^s_j[k]$ and $y^s_i[k] > y^s_j[k]$.
\\ If one of the two conditions above holds, node $v_j$ sets 
$$ 
z^s_j[k+1] = \max_{v_i \in \mathcal{N}_j^-} z^s_i[k] , \ \ \text{and}
$$ 
$$ 
y^s_j[k+1] = \max_{v_i \in \{v_{i'} \in \mathcal{N}_j^- | z^s_{i'}[k] = z^s_j[k+1]\}} y^s_i[k] ,
$$ 
which means $q^s_j[k+1] = \frac{y^s_j[k+1]}{z^s_j[k+1]}$. 
\\ 3) \underline{Event Trigger Conditions 2:} Node $v_j$ checks if one of the following two conditions hold:
\\ Condition~$(i)$: $0 < z_j[k] < z^s_j[k+1]$.
\\ Condition~$(ii)$: $z_j[k] = z^s_j[k+1]$ and $y_j[k] < y^s_j[k+1]$. 
If one of the two conditions above holds, then node $v_j$ chooses an out-neighbor $v_l \in \mathcal{N}_j^+$ according to the order $P_{lj}$ (in a round-robin fashion) and transmits $y_j[k]$ and $z_j[k]$. Then it sets $y_j[k] = 0$ and $z_j[k] = 0$.
\\ 4) It receives $y_i[k]$ and $z_i[k]$ from its in-neighbors $v_i \in \mathcal{N}_j^-$ and sets 
$$
y_j[k+1] = y_j[k] + \sum_{v_i \in \mathcal{N}_j^-} w_{ji}[k]y_i[k] ,
$$
and 
$$
z_j[k+1] = z_j[k] + \sum_{v_i \in \mathcal{N}_j^-} w_{ji}[k]z_i[k] ,
$$
where  $w_{ji}[k] = 0$ if no message is received (otherwise $w_{ji}[k] = 1$).
\\ 5) \underline{Event Trigger Conditions 3:} Node $v_j$ checks if one of the following two conditions hold:
\\ Condition~$(i)$: $z_j[k+1] > z^s_j[k+1]$.
\\ Condition~$(ii)$: $z_j[k+1] = z^s_j[k+1]$ and $y_j[k+1] > y^s_j[k+1]$.
If one of the two conditions above holds, then node $v_j$ sets $z^s_j[k+1] = z_j[k+1]$ and $y^s_j[k+1] = y_j[k+1]$ which means $q^s_j[k+1] = \frac{y^s_j[k+1]}{z^s_j[k+1]}$. 
\\ 6) If either ``Event Trigger Conditions~$2$'' or ``Event Trigger Conditions~$3$'' hold, it broadcasts $z^s_j[k+1]$ and $y^s_j[k+1]$ to every out-neighbor $v_l \in \mathcal{N}_j^+$.
\\ 7) It repeats (increases $k$ to $k + 1$ and goes back to Step~1).
\label{algorithm_max}
\end{varalgorithm}

\vspace{.1cm}

We now analyze the functionality of the distributed algorithm and prove that it allows all agents to reach quantized average consensus after a finite number of steps. 
Furthermore, we will also show that once quantized average consensus is reached transmissions are ceased from each agent. 
Depending on the graph structure and the initial mass variables of each node, we have the following two possible scenarios: ``Full Mass Summation'' or ``Partial Mass Summation'' (as presented in Section~\ref{DetAlgorithm}). 
An example regarding the scenario of ``Partial Mass Summation'' is given below.

\begin{example}

Consider a strongly connected digraph $\mathcal{G}_d = (\mathcal{V}, \mathcal{E})$, shown in Fig.~\ref{max_example}, with $\mathcal{V} = \{ v_1, v_2, v_3, v_4 \}$ and $\mathcal{E} = \{ m_{31}, m_{41}, m_{12}, m_{13}, m_{43}, m_{24} \}$ where each node has an initial quantized value $y_1[0] = 2$, $y_2[0] = 4$, $y_3[0] = 7$ and $y_4[0] = 9$ respectively. 
The average of the initial values, is equal to $q = \frac{22}{4}$. 

\begin{figure}[h]
\begin{center}
\includegraphics[width=0.25\columnwidth]{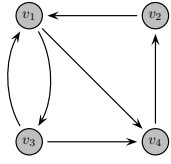}
\caption{Example of digraph for partial mass summation when using Algorithm~\ref{algorithm_max}.}
\label{max_example}
\end{center}
\end{figure}

Each node $v_j \in \mathcal{V}$ follows the Initialization steps ($1-2$) in Algorithm~\ref{algorithm_prob}, assigning to each of its outgoing edges $v_l \in \mathcal{N}^+_j$ a unique order $P_{lj}$ in the set $\{0,1,..., \mathcal{D}_j^+ -1\}$. 
Assume that the unique orders assigned by each node are the following:
\begin{eqnarray}\label{priorities_max}
v_1: & P_{41} = 0,  & P_{31} = 1, \nonumber \\
v_2: & P_{12} = 0,  &  \nonumber \\ 
v_3: & P_{13} = 0,  & P_{43} = 1,  \nonumber \\ 
v_4: & P_{24} = 0.  &  \nonumber \;
\end{eqnarray}
Furthermore, each node broadcasts its state variables $z^s_j[0]$ and $y^s_j[0]$ to every out-neighbor $v_l \in \mathcal{N}_j^+$.
The initial mass and state variables, at time step $k=0$, for nodes $v_1, v_2, v_3, v_4$ are shown in Table~\ref{table_max1}. 

\begin{center}
\captionof{table}{Initial Mass and State Variables for Fig.~\ref{max_example}}
\label{table_max1}
{\small 
\begin{tabular}{|c||c|c|c|c|c|}
\hline
Node &\multicolumn{5}{c|}{Mass and State Variables for $k=0$}\\
 &$y_j[0]$&$z_j[0]$&$y^s_j[0]$&$z^s_j[0]$&$q^s_j[0]$\\
\cline{1-6}
 &  &  &  &  & \\
$v_1$ & 2 & 1 & 2 & 1 & 2 / 1\\
$v_2$ & 4 & 1 & 4 & 1 & 4 / 1\\
$v_3$ & 7 & 1 & 7 & 1 & 7 / 1\\
$v_4$ & 9 & 1 & 9 & 1 & 9 / 1\\
\hline
\end{tabular}
}
\end{center}
\vspace{0.2cm}

During the operation of Algorithm~\ref{algorithm_max}, at time step $k=0$, each node $v_j$ will receive the state variables $z^s_i[0]$ and $y^s_i[0]$ from every in-neighbor $v_i \in \mathcal{N}_j^-$. 
According to the ``Event Trigger Conditions~$1$'', each node will update its state variables. 
Here, we have that nodes $v_1$ and $v_2$ will update them. 
Furthermore, following ``Event Trigger Conditions~$2$'', nodes $v_1$ and $v_2$ will transmit their mass variables according to their unique predetermined order (nodes $v_3$ and $v_4$ will not transmit their mass variables because ``Event Trigger Conditions~$2$'' do not hold for them). 
Then, nodes $v_1$ and $v_4$ will receive the mass variables from their in-neighbors (i.e., from $v_2$ and $v_1$ respectively) and, following ``Event Trigger Conditions~$3$'', they will update (and broadcast) their state variables. 
The mass and state variables, at time step $k=1$, for nodes $v_1, v_2, v_3, v_4$ are shown in Table~\ref{table_max2}. 

During time step $k=1$, each node $v_j$ will receive the state variables from every in-neighbor and following ``Event Trigger Conditions~$1$'', $v_1$ and $v_2$ will update their state variables. 
Then, following ``Event Trigger Conditions~$2$'', node $v_1$ will transmit its mass variables (according to its unique predetermined order) towards node $v_3$. 
Node $v_3$ will receive the mass variables from node $v_1$ and, following ``Event Trigger Conditions~$3$'', it will update (and broadcast towards its out-neighbors) its state variables. 
The mass and state variables, at time step $k=2$, for nodes $v_1, v_2, v_3, v_4$ are shown in Table~\ref{table_max3}. 

\begin{center}
\captionof{table}{Mass and State Variables for Fig.~\ref{max_example} for $k=1$}
\label{table_max2}
{\small 
\begin{tabular}{|c||c|c|c|c|c|}
\hline
Node &\multicolumn{5}{c|}{Mass and State Variables for $k=1$}\\
 &$y_j[1]$&$z_j[1]$&$y^s_j[1]$&$z^s_j[1]$&$q^s_j[1]$\\
\cline{1-6}
 &  &  &  &  & \\
$v_1$ & 4 & 1 & 7 & 1 & 7 / 1\\
$v_2$ & 0 & 0 & 9 & 1 & 9 / 1\\
$v_3$ & 7 & 1 & 7 & 1 & 7 / 1\\
$v_4$ & 11 & 2 & 11 & 2 & 11 / 2\\
\hline
\end{tabular}
}
\end{center}
\vspace{0.2cm}

\begin{center}
\captionof{table}{Mass and State Variables for Fig.~\ref{max_example} for $k=2$}
\label{table_max3}
{\small 
\begin{tabular}{|c||c|c|c|c|c|}
\hline
Node &\multicolumn{5}{c|}{Mass and State Variables for $k=2$}\\
 &$y_j[2]$&$z_j[2]$&$y^s_j[2]$&$z^s_j[2]$&$q^s_j[2]$\\
\cline{1-6}
 &  &  &  &  & \\
$v_1$ & 0 & 0 & 9 & 1 & 9 / 1\\
$v_2$ & 0 & 0 & 9 & 1 & 9 / 1\\
$v_3$ & 11 & 2 & 11 & 2 & 11 / 2\\
$v_4$ & 11 & 2 & 11 & 2 & 11 / 2\\
\hline
\end{tabular}
}
\end{center}
\vspace{0.2cm}

In Table~\ref{table_max3} we can see that for the set $\mathcal{V}^p[2] = \{v_3, v_4 \}$ we have $y_3[2] = y_4[2]$ and $z_3[2] = z_4[2]$ while, for the set $ \mathcal{V} - \mathcal{V}^p[2] = \{v_1, v_2 \}$ we have 
$y_1[2] = y_2[2] = 0$ and $z_1[2] = z_2[2] = 0$. 
This means that we have a ``Partial Mass Summation'' scenario. 
In this case, we will see that ``Event Trigger Conditions~$3$'' will not hold again for any node for time steps $k > 2$ (i.e., no node will transmit again its mass variables).

In the next time step ($k=2$), once each node receives the state variables from every in-neighbor, $v_2$ will update and broadcast its state variables (according to the ``Event Trigger Conditions~$1$''). 
Then, since the ``Event Trigger Conditions~$2$'' do not hold for any node, no transmissions of mass variables will be performed; thus, no node will receive any mass variables.
This means that the ``Event Trigger Conditions~$3$'' also do not hold for any node. 
The mass and state variables, at time step $k=3$, for nodes $v_1, v_2, v_3, v_4$ are shown in Table~\ref{table_max4}. 

\begin{center}
\captionof{table}{Mass and State Variables for Fig.~\ref{max_example} for $k=3$}
\label{table_max4}
{\small 
\begin{tabular}{|c||c|c|c|c|c|}
\hline
Node &\multicolumn{5}{c|}{Mass and State Variables for $k=3$}\\
 &$y_j[3]$&$z_j[3]$&$y^s_j[3]$&$z^s_j[3]$&$q^s_j[3]$\\
\cline{1-6}
 &  &  &  &  & \\
$v_1$ & 0 & 0 & 9 & 1 & 9 / 1\\
$v_2$ & 0 & 0 & 11 & 2 & 11 / 2\\
$v_3$ & 11 & 2 & 11 & 2 & 11 / 2\\
$v_4$ & 11 & 2 & 11 & 2 & 11 / 2\\
\hline
\end{tabular}
}
\end{center}
\vspace{0.2cm}

In time step $k=3$, we have that $v_1$ will update and broadcast its state variables (according to the ``Event Trigger Conditions~$1$''). 
Then, since the ``Event Trigger Conditions~$2$'' and ``Event Trigger Conditions~$3$'' do not hold for any node no transmissions will be performed. 
The mass and state variables, at time step $k=4$, for nodes $v_1, v_2, v_3, v_4$ are shown in Table~\ref{table_max5}. 

\begin{center}
\captionof{table}{Mass and State Variables for Fig.~\ref{max_example} for $k=4$}
\label{table_max5}
{\small 
\begin{tabular}{|c||c|c|c|c|c|}
\hline
Node &\multicolumn{5}{c|}{Mass and State Variables for $k=4$}\\
 &$y_j[4]$&$z_j[4]$&$y^s_j[4]$&$z^s_j[4]$&$q^s_j[4]$\\
\cline{1-6}
 &  &  &  &  & \\
$v_1$ & 0 & 0 & 11 & 2 & 11 / 2\\
$v_2$ & 0 & 0 & 11 & 2 & 11 / 2\\
$v_3$ & 11 & 2 & 11 & 2 & 11 / 2\\
$v_4$ & 11 & 2 & 11 & 2 & 11 / 2\\
\hline
\end{tabular}
}
\end{center}
\vspace{0.2cm}

In Table~\ref{table_max5}, we can see that (\ref{alpha_z_y}) and (\ref{alpha_q}) hold for every node for $\alpha = 2$ (i.e., every node has reached quantized average consensus). 
Notice that no set of event trigger conditions holds for any node for time steps $k \geq 4$. 
This means that no node will perform any transmissions of its state or mass variables for time steps $k \geq 4$. \hspace*{\fill} $\square$
\end{example}

\begin{remark}
It is interesting to note here that if, during the Initialization of Algorithm~\ref{algorithm_max}, node $v_1$ sets its priorities as $P_{31} = 0$ and $P_{41} = 1$, then we will notice the scenario of ``Full Mass Summation'' for node $v_4$ (i.e., (\ref{alpha_z_y}) and (\ref{alpha_q}) hold for every node for $\alpha = 1$) instead of the scenario of ``Partial Mass Summation''. 
\end{remark}

\subsection{Deterministic Convergence Analysis}

The setup is identical to the one presented in Section~\ref{MaxAlgorithm}, where the mass variables of node $v_j$ for which (\ref{great_z_prop1_det}) and (\ref{great_z_prop2_det}) hold at time step $k$ are called the ``leading mass''. 
Furthermore, we will call the mass variables of every node $v_i \in \mathcal{V}$, for which $z_i[k] > 0$ and for which neither (\ref{great_z_prop1_det}) nor (\ref{great_z_prop2_det}) hold at time step $k$, as the ``follower mass''. 

\begin{lemma}\label{before_second_lemma}
If, during time step $k_0$ of Algorithm~\ref{algorithm_max}, the mass variables of node $v_j$ fulfil (\ref{great_z_prop1_det}) and (\ref{great_z_prop2_det}), then the state variables of every node $v_i \in \mathcal{V}$ satisfy 
\begin{eqnarray}\label{first_z}
z_i^s[k_0] \leq z_j[k_0] ,
\end{eqnarray}
or 
\begin{eqnarray}\label{first_zy}
z_i^s[k_0] = z_j[k_0] \ \ \text{and} \ \ y_i^s[k_0] \leq y_j[k_0].
\end{eqnarray}
\end{lemma}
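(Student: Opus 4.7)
The plan is to mirror the contradiction argument used in Lemma~\ref{first_lemma}, but adapted to Algorithm~\ref{algorithm_max}, where state variables can be updated either from a node's own mass (via ``Event Trigger Conditions~$3$'') or by adopting a larger state variable received from an in-neighbor (via ``Event Trigger Conditions~$1$''). The central observation is that, while state-variable broadcasts propagate state information between nodes, every state variable can ultimately be traced back to a \emph{mass variable} that existed at some earlier time step, since the only way a new numerical value enters the state variables is through Step~$5$ (or through initialization, where state and mass are equal).

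First, I would suppose for contradiction that, at time step $k_0$, some node $v_i \in \mathcal{V}$ violates both (\ref{first_z}) and (\ref{first_zy}), i.e., either $z_i^s[k_0] > z_j[k_0]$, or $z_i^s[k_0] = z_j[k_0]$ and $y_i^s[k_0] > y_j[k_0]$. Let $k'_0 \leq k_0$ be the most recent time at which node $v_i$ updated its state via ``Event Trigger Conditions~$3$'' (the Step~$5$ update from its own mass); if no such time exists since initialization, trace back through the chain of ``Event Trigger Conditions~$1$'' (Step~$2$) adoptions to an ancestor node whose current state was set by its own mass variable. Either way, there is a node $v_{l'}$ and time step $k''_0 \leq k_0$ such that $z_{l'}[k''_0] = z_i^s[k_0]$ and $y_{l'}[k''_0] = y_i^s[k_0]$ (with the base case being initialization, where mass and state coincide).

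Next, I would invoke mass conservation under the dynamics of Steps~$3$ and $4$: once a node holds a positive mass, that mass either remains at the node (if ``Event Trigger Conditions~$2$'' do not hold), is transmitted intact to an out-neighbor, or is merged with other positive masses, producing a new mass whose denominator strictly exceeds each of the summed denominators (and whose numerator is the sum of the summed numerators). In particular, starting from the mass $(y_{l'}[k''_0], z_{l'}[k''_0])$, there must exist, at time $k_0$, some node $v_l \in \mathcal{V}$ holding a mass satisfying either $z_l[k_0] > z_{l'}[k''_0]$, or $z_l[k_0] = z_{l'}[k''_0]$ with $y_l[k_0] \geq y_{l'}[k''_0]$.

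Combining the two facts yields either $z_l[k_0] > z_j[k_0]$, or $z_l[k_0] = z_j[k_0]$ with $y_l[k_0] > y_j[k_0]$, both of which contradict the hypothesis that the mass variables of $v_j$ satisfy (\ref{great_z_prop1_det}) and (\ref{great_z_prop2_det}) at time $k_0$. Hence the assumption fails and the claimed bounds hold for every $v_i \in \mathcal{V}$. The main obstacle will be the bookkeeping around the state-broadcast chain introduced by ``Event Trigger Conditions~$1$'': unlike Lemma~\ref{first_lemma}, a node may hold a large $z_i^s$ without ever having held the corresponding mass itself, so the traceback must be carried out carefully across both time and the in-neighbor graph until a mass-sourced update is reached, and then forward-propagated using mass conservation to exhibit a present-time mass that dominates the supposed leading mass.
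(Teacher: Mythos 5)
Your proof is correct, and it rests on exactly the same two facts as the paper's: every state variable is a snapshot of a mass variable that existed at some earlier time, and a positive mass can only persist unchanged or merge into a mass with strictly larger denominator, so no past mass can lexicographically dominate the current leading mass. The organization differs, though: the paper argues directly via the monotone invariant $z^{(m)}[k] = \max_{v_l \in \mathcal{V}} z_l[k]$ (non-decreasing by Step~4, equal to $z_j[k]$ for the leading mass, and an upper bound on every $z_i^s[k]$ set in Steps~2 and~5), with only a short contradiction remark for the tie-breaking on $y$; you instead re-run the full contradiction/traceback argument of Lemma~\ref{first_lemma}, tracing a supposedly dominating state variable backward through the chain of ``Event Trigger Conditions~$1$'' adoptions and ``Event Trigger Conditions~$3$'' updates to a mass-sourced value, and then forward-propagating that mass to time $k_0$. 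Your version is somewhat longer but makes explicit a step the paper glosses over, namely why a state value acquired purely by broadcast (never held as a mass by the node in question) still traces back to an actual mass; the paper's version buys brevity by packaging the forward-propagation step into the single observation that $z^{(m)}[k]$ is non-decreasing. One small point worth stating explicitly in your writeup: the traced-back mass is necessarily positive (initial masses have $z=1$, and an ``Event Trigger Conditions~$3$'' update requires $z_j[k+1] \geq z_j^s[k+1] \geq 1$), so the forward propagation never starts from a vacuous zero mass.
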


\begin{proof}
Let us consider the variable
$$
z^{(m)}[k] = \max_{v_l \in \mathcal{V}} z_l[k] . 
$$
From Iteration Step~$4$ of Algorithm~\ref{algorithm_max} we have that $z^{(m)}[k]$ is non-decreasing (i.e., $z^{(m)}[k+1] \geq z^{(m)}[k]$, for every $k$). 
Furthermore, since the mass variables of node $v_j$ fulfill (\ref{great_z_prop1_det}) and (\ref{great_z_prop2_det}), then, during every time step $k$, it holds that  
$$
z_j[k] = z^{(m)}[k]. 
$$

In addition, for every $k$, during Iteration Steps~$2$ and $5$ of Algorithm~\ref{algorithm_max}, for every node $v_i \in \mathcal{V}$, we have that $z_i^s[k]$ is either less than $z^{(m)}[k]$ (i.e., $z_i^s[k] < z^{(m)}[k]$) or equal to $z^{(m)}[k]$ (i.e., $z_i^s[k] = z^{(m)}[k]$). 
As a result, at time step $k$, the state variables of every node $v_i \in \mathcal{V}$ satisfy 
$$
z_i^s[k] \leq z_j[k] .
$$

Finally, from Iteration Steps~$2$ and $5$, (for every $k$) if, it holds that $z_i^s[k] = z_j[k]$ for some node $v_i$ then we have that either $y_i^s[k] < y_j[k]$ or $y_i^s[k] = y_j[k]$. 
[Note here that if $z_i^s[k] = z_j[k]$ and $y_i^s[k] > y_j[k]$, then the mass variables of $v_j$ do not fulfil (\ref{great_z_prop1_det}) and (\ref{great_z_prop2_det}) which is a contradiction.]
As a result we have that if the mass variables of node $v_j$ fulfil (\ref{great_z_prop1_det}) and (\ref{great_z_prop2_det}), then the state variables of every node $v_i \in \mathcal{V}$ satisfy (\ref{first_z}) or (\ref{first_zy}). 
\end{proof}

\begin{lemma}\label{second_lemma}
If, during time step $k_0$ of Algorithm~\ref{algorithm_max}, the mass variables of {\em each} node $v_j$ with nonzero mass variables fulfill (\ref{great_z_prop1_det}) and (\ref{great_z_prop2_det}), then we have only ``leading masses'' and no ``follower masses''. 
This means that the ``Event Trigger Conditions~$2$'' will never hold again for future time steps $k \geq k_0$. 
As a result, the transmissions that (may) take place will only be via broadcasting (from ``Event Trigger Conditions~$1$ and $3$'') for at most $n-1$ time steps and then they will cease. 
\end{lemma}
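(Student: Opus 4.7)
The plan is to prove the three conclusions in sequence: (1) there are no follower masses at $k_0$; (2) Event Trigger Conditions~$2$ are never satisfied for any $k \geq k_0$; and (3) the remaining state-variable broadcasts cease within $n-1$ time steps. Conclusion~(1) is immediate from the hypothesis, which states that every node with nonzero mass at $k_0$ fulfills (\ref{great_z_prop1_det})--(\ref{great_z_prop2_det}); this is the very definition of a leading mass, so no follower masses exist. Since all leading masses necessarily share a common value, I will write this common value as $(y^*, z^*)$.

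To prove (2), I would carry out an induction on $k \geq k_0$ establishing the invariant that (a) the nonzero masses are still exactly the leading masses with common value $(y^*, z^*)$, (b) for each leading-mass node $v_j$ we have $(y_j^s[k], z_j^s[k]) = (y^*, z^*)$, and (c) for every other node $v_i$ the state $(y_i^s[k], z_i^s[k])$ is bounded above by $(y^*, z^*)$ in the order given by (\ref{great_z_prop1_det})--(\ref{great_z_prop2_det}). The base case at $k = k_0$ follows by combining Lemma~\ref{before_second_lemma} (which gives the ``$\leq$'' bound in (c) for every node, including the leading ones) with the observation that the Conditions~$3$ update in Step~$5$ of any previous iteration always enforces $(y_j^s, z_j^s) \geq (y_j, z_j)$ in the same order; these two bounds pinch to give (b).

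For the inductive step, Step~$2$ at time $k$ only raises a node's state to the maximum of the values received from in-neighbors, and by (b)--(c) no such value exceeds $(y^*, z^*)$, so the bounds persist through Step~$2$ and in particular yield $z_j^s[k+1] = z^*$, $y_j^s[k+1] = y^*$ for every leading-mass node $v_j$. The Conditions~$2$ in Step~$3$ therefore fail at every node: for a leading-mass node the mass equals the post-Step~$2$ state, and for a zero-mass node Condition~$(i)$ fails since $z_j[k] = 0$ while Condition~$(ii)$ fails since $z_j^s[k+1] \geq 1$. Consequently, no mass is transmitted in Step~$3$, nothing is received in Step~$4$, and Conditions~$3$ in Step~$5$ likewise cannot fire (the masses are unchanged and the state already matches or exceeds them). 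This preserves the invariant at $k+1$ and simultaneously proves~(2).

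Finally, since Conditions~$2$ and Conditions~$3$ never trigger for $k \geq k_0$, the only remaining activity is the broadcasting driven by Conditions~$1$, which propagates the maximum state $(y^*, z^*)$ outward from the leading-mass nodes one hop per time step. Because $\mathcal{G}_d$ is strongly connected on $n$ nodes and therefore has diameter at most $n-1$, within $n-1$ time steps of $k_0$ every node's state reaches $(y^*, z^*)$; thereafter no received state can strictly exceed a node's own state, Conditions~$1$ fails everywhere, and all transmissions cease. The main obstacle I foresee is the careful bookkeeping of when state and mass variables change within a single iteration, since Conditions~$2$ is evaluated against the post-Step~$2$ value $z_j^s[k+1]$ rather than against $z_j^s[k]$; verifying that Lemma~\ref{before_second_lemma} still controls this intermediate value (it does, because Step~$2$ only copies values bounded by $(y^*, z^*)$ under the invariant) is the detail that requires the most care.
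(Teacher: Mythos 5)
Your proof is correct and takes essentially the same route as the paper's: all nonzero masses are equal ``leading masses,'' Event Trigger Conditions~$2$ can never fire (so no mass is ever transmitted again), and the only remaining activity is the Conditions~$1$ broadcast wave, which dies out within the graph diameter $\leq n-1$. Your explicit induction with the invariant that leading-mass nodes satisfy state $=$ mass (pinched between Lemma~\ref{before_second_lemma} and the Step~$5$ update) is a more careful formalization of the paper's narrative argument, but it is not a different approach.
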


\begin{proof}
Let us assume that during time step $k_0$ two (or more) mass variables merge at nodes $v_j$, $v_i$, so that these two nodes simultaneously become ``leading masses'' (more generally, we could have more than two leading masses) and all other nodes have zero mass variables. 
Since the mass variables of nodes $v_j$, $v_i$, during time step $k_0$, become ``leading masses'' then there exists a set $\mathcal{V}^p[k_0] \subseteq \mathcal{V}$ in which we have $y_j[k_0] = y_i[k_0]$ and $z_j[k_0] = z_i[k_0]$, $\forall v_j, v_i \in \mathcal{V}^p[k_0]$ and $y_l[k_0] = 0  \ \ \text{and} \ \ z_l[k_0] = 0$, for each $ v_l \in \mathcal{V} - \mathcal{V}^p[k_0]$.
Once this ``merge'' occurs then we have that for both $v_j$ and $v_i$ the ``Event Trigger Conditions $1$'' and the ``Event Trigger Conditions $2$'' do not hold, but ``Event Trigger Conditions $3$'' do hold. 
This means that $v_j$ and $v_i$ do not transmit their mass variables but rather they broadcast their new state variables to their out-neighbors. 
Then, their out-neighbors, $v_{l_j}$ and $v_{l_i}$ respectively, will update their state variables and broadcast their new state variables towards their out-neighbors. 
The updating and broadcasting of state variables will continue, until all nodes obtain state variables equal to $z^s_j[k_0]$ and $y^s_j[k_0]$. 
Note that during this update and broadcasting of state variables, no node transmits its mass variables. 
After at most $n-1$ steps, all nodes will be aware of the values $z^s_j[k_0]$ and $y^s_j[k_0]$, and at that point all transmissions will seize.
\end{proof}

\begin{prop}
\label{PROP1_max}
The execution of Algorithm~\ref{algorithm_max} allows each node $v_j \in \mathcal{V}$ to reach quantized average consensus after a finite number of steps, bounded by $O(nm^2)$. 
Furthermore, once quantized average consensus is reached, each node stops transmitting towards its out-neighbors. 
\end{prop}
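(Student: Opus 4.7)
The plan is to mirror the structure of the proof of Proposition~\ref{PROP1_det}, replacing the single trigger of Algorithm~\ref{algorithm1} by the three sets of triggers in Algorithm~\ref{algorithm_max}, and then to conclude the cessation of transmissions by invoking Lemma~\ref{second_lemma}. The overall strategy is to show that (a) follower masses are always eventually forced to move, (b) the same ``merge within an $m^2$ window'' argument from Proposition~\ref{PROP1_det} yields an $O(nm^2)$ bound on the time required to reach a configuration with only leading masses, and (c) once such a configuration is reached, Lemma~\ref{second_lemma} delivers both consensus and the termination of transmissions.

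First, I would prove an analog of Lemma~\ref{first_lemma} tailored to Algorithm~\ref{algorithm_max}: whenever a node $v_i$ holds a nonzero mass that is \emph{not} a leading mass (so there exists some node $v_j$ for which (\ref{great_z_prop1_det}) and (\ref{great_z_prop2_det}) hold strictly at $v_j$), then after a bounded delay of at most $n-1$ steps, Event Trigger Conditions~$2$ must fire at $v_i$, forcing its mass $(y_i,z_i)$ to be transmitted along the round-robin order $P_{li}$. The key ingredient is Lemma~\ref{before_second_lemma} together with strong connectivity: the broadcasts produced by Event Trigger Conditions~$1$ and $3$ at the current leading-mass holder propagate the pair $(y^s_j,z^s_j)$ through the digraph, so within at most $n-1$ steps every follower's state variables are lexicographically dominated by the leading value and consequently dominate its own stored mass, triggering the transmission.

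With follower masses guaranteed to move, I would then reuse the mass-propagation argument of Proposition~\ref{PROP1_det}. In any window of $O(m^2)$ consecutive time steps, an unobstructed follower mass traverses every edge at least once (Proposition~$3$ of \cite{2014:RikosHadj}) and therefore reaches every other node; hence it must either (i) merge with another mass at a common receiving node, or (ii) become ``obstructed'' at a node whose state variables already dominate the follower (which, by Lemma~\ref{before_second_lemma}, can only happen because some merge has already occurred elsewhere). In either case, the number of distinct nonzero mass variables in the network strictly decreases within each such window. Since at most $n-1$ such decrements can occur before every nonzero-mass holder carries an equal ``leading mass'', the configuration in which (\ref{great_z_prop1_det}) and (\ref{great_z_prop2_det}) hold with equality at all nonzero-mass nodes is reached within $O(nm^2)$ time steps; at this point, if $\alpha$ is the number of surviving leading-mass holders, every such node has $y_j=\sum_l y_l[0]/\alpha$ and $z_j=n/\alpha$, so that $q^s_j=q$.

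Finally, I would invoke Lemma~\ref{second_lemma}: once every nonzero mass is a leading mass, Event Trigger Conditions~$2$ cannot fire again, and the remaining propagation of $(y^s,z^s)$ via Event Trigger Conditions~$1$ and $3$ terminates after at most $n-1$ additional broadcast rounds, after which no event-trigger condition holds at any node and all transmissions cease. The main obstacle I anticipate is in the first step: carefully verifying that the interleaving of state reception (Step~$1$), state update (Step~$2$), mass transmission (Step~$3$), mass reception (Step~$4$), and subsequent state update (Step~$5$) within a single iteration preserves the lexicographic monotonicity needed for a follower to trigger Event Trigger Conditions~$2$, even while new follower masses may be arriving and the identity of the leading-mass node may itself be changing due to in-flight mergings.
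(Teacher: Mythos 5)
Your proposal is correct and takes essentially the same route as the paper's proof: the leading mass is stationary under Event Trigger Conditions~2, broadcasts propagate its state values to all nodes within $n$ steps so that every follower mass is forced to move, each moving follower then traverses every edge within $m^2$ steps and therefore merges (in particular with the stationary leading-mass holder if with nothing else), at most $n-1$ merges yield the $O(nm^2)$ bound, and Lemma~\ref{second_lemma} then delivers consensus and the cessation of transmissions. The only quibble is your case~(ii): in Algorithm~\ref{algorithm_max} a follower whose node's state variables dominate its mass is precisely the one that \emph{transmits}, so the Algorithm~\ref{algorithm1} notion of ``obstruction'' is inverted here --- a follower stalls only while its node's state is stale, a situation your first paragraph already covers --- and this does not affect the overall argument.
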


\begin{proof}
Before starting the analysis of Algorithm~\ref{algorithm_max}, it is important to notice that the ``leading mass'' will not fulfill ``Event Trigger Conditions~$2$'' in Step~$3$ of the Iteration of Algorithm~\ref{algorithm_max} which means that the corresponding node (say $v_j$) will not transmit its mass variables to its out-neighbors $v_l \in \mathcal{N}_j^+$ according to its predetermined priority. 
In this proof, we will show that there exists $k_0 \in \mathbb{Z}_+$, where for every $k \geq k_0$, the mass variables of {\em every} node $v_j$, for which $z_j[k] > 0$, fulfill (\ref{great_z_prop1_det}) and (\ref{great_z_prop2_det}) (i.e., for $k \geq k_0$ we have only ``leading masses''). 
Furthermore, from Lemma~\ref{second_lemma}, we have that there exists $k_1 > k_0$, where for every $k \geq k_1$ the state variables of every node $v_j \in \mathcal{V}$ fulfill (\ref{alpha_z_y}) and (\ref{alpha_q}) for $\alpha \in \mathbb{Z}_+$ (i.e., every node has reached quantized consensus) and thus transmissions cease.

During the Initialization steps of Algorithm~\ref{algorithm_max}, we have that each node will broadcast its state variables to every out-neighbor. 
Then, during Iteration Step~$1$, each node will receive and update its state variables, while during Step~$2$ (i.e., ``Event Trigger Conditions~$1$''), it will broadcast towards its out-neighbors the updated values (of the state variables). 
This means that after $n$ iterations (assuming, that no other mass variables ``merged'' during $n$ time steps), the state variables of each node $v_i \in \mathcal{V}$, satisfy
\begin{equation}\label{great_z_prop3_max}
z^s_i[n] = z_{j_1}[0], \ \ \text{and} \ \ y^s_i[n] = y_{j_1}[0] , \nonumber
\end{equation}
where the mass variables of node $v_{j_1}$ are the ``leading mass''. 
As a result we have that, after $n$ iterations, the ``Event Trigger Conditions~$2$'' will hold for every node $v_i \in \mathcal{V} - \{ v_{j_1} \}$, and thus every node (except node $v_{j_1}$ which is the ``leading mass'') will transmit its mass variables toward its out-neighbors according to its unique priority. 
Note here that the number of iterations required for the ``follower mass'' to reach every node $v_i \in \mathcal{V}$ is bounded by $m^2$, where $m = | \mathcal{E} |$ is the number of edges of the given digraph $\mathcal{G}$ (in this case Proposition~3 in \cite{2014:RikosHadj}, provides a bound for the ``follower mass'' to travel via each edge in the graph and thus necessarily also reach every other node).
Let us assume now that, after executing Algorithm~\ref{algorithm_max} for additional $m^2$ steps, we have that the mass variables $z_{i_1}[0]$, $y_{i_1}[0]$ and $z_{i_2}[0]$, $y_{i_2}[0]$ of nodes $v_{i_1}$ and $v_{i_2}$ respectively, meet (and ``merge'') in node $v_{j_2}$, and after this merge they become the ``leading mass''.  
This means that, during time step $n + m^2$, node $v_{j_2}$ will not transmit its mass variables (because ``Event Trigger Conditions~$2$'' do not hold) but it will broadcast its state variables to every out-neighbor (because ``Event Trigger Conditions~$3$'' hold). 
Thus, after additional $n$ iterations, the state variables of each node $v_i \in \mathcal{V}$ satisfy
\begin{equation}\label{great_z_prop4_max_part1}
z^s_i[2n + m^2] = z_{j_2}[n + m^2], \nonumber
\end{equation}
and
\begin{equation}\label{great_z_prop4_max_part2}
y^s_i[2n + m^2] = y_{j_2}[n + m^2] , \nonumber
\end{equation}
where the mass variables of node $v_{j_2}$ are now the ``leading mass''. 
This means that the ``Event Trigger Conditions~$2$'' will hold for every node $v_i \in \mathcal{V} - \{ v_{j_2} \}$, and thus every node (except node $v_{j_2}$ which is now the ``leading mass'') will transmit its mass variables toward its out-neighbors according to its unique priority. 
Note here that also node $v_{j_1}$ will transmit its mass variables toward its out-neighbors (since the state variables of $v_{j_1}$ are equal to the mass variables of the ``leading mass'' $v_{j_2}$ this means that ``Event Trigger Conditions~$2$'' will also hold for $v_{j_1}$). 
Let us assume now that, after executing Algorithm~\ref{algorithm_max} for additional $m^2$ steps, the mass variables $z_{i_3}[0]$, $y_{i_3}[0]$ and $z_{i_4}[0]$, $y_{i_4}[0]$ of nodes $v_{i_3}$ and $v_{i_4}$ respectively, meet (and ``merge'') in node $v_{j_3}$, and after this merge they become the ``leading mass''. 
Again, this means that during time step $2n + 2m^2$, node $v_{j_3}$ will not transmit its mass variables (because ``Event Trigger Conditions~$2$'' do not hold) but it will broadcast its state variables to every out-neighbor (because ``Event Trigger Conditions~$3$'' hold). 
After additional $n$ iterations, the state variables of each node $v_i \in \mathcal{V}$ satisfy
\begin{equation}\label{great_z_prop5_max_part1}
z^s_i[3n + 2m^2] = z_{j_3}[2n + 2m^2], \nonumber
\end{equation}
and 
\begin{equation}\label{great_z_prop5_max_part2}
y^s_i[3n + 2m^2] = y_{j_3}[2n + 2m^2] , \nonumber
\end{equation}
where the mass variables of node $v_{j_3}$ are now the new ``leading mass''. 
By continuing this analysis, we can see that every $n + m^2$ time steps at least two ``follower masses'' ``merge'' and become the ``leading mass''. 
Since, during the Initialization steps of Algorithm~\ref{algorithm_max}, we have $n$ initial mass variables this means that after $(n-1)n + (n-1)m^2$ time steps {\it all} initial mass variables will ``merge'' into one mass (obviously the mass variables in which every initial mass has ``merged'' is the ``leading mass''). 
As a result, at time step $(n-1)n + (n-1)m^2$ we have only ``leading masses'' and no ``follower masses''. 
Thus, from Lemma~\ref{second_lemma}, we have that after additional $n$ time steps every node will have state variables equal to the ``leading mass'' (i.e., $z^s_i[n^2 + (n-1)m^2] = n$ and $y^s_i[n^2 + (n-1)m^2] = \sum_{l=1}^{n}{y_l[0]}$, for every $v_i \in \mathcal{V}$) and then transmissions will be ceased.

Note that so far we considered the scenario where there is only one ``leading mass'' during every time step $k$ and it ``merges'' with only one nonzero mass variable every $n + m^2$ time steps. 
In other scenarios, we can consider multiple ``leading masses'' (i.e., when the nonzero mass variables fulfill (\ref{great_z_prop1_det}) and (\ref{great_z_prop2_det}) for more than one node) which will also speed up the convergence since the ``follower masses'' will ``merge'' more frequently. 
\end{proof}

\begin{remark}
It is important to note here that the operation of Algorithm~\ref{algorithm_max} follows an opposite scenario than the operation of Algorithm~\ref{algorithm1}. 
In Algorithm~\ref{algorithm1}, the ``leading mass'' always fulfills the ``Event Trigger Conditions'', and thus it is always transmitted from each node according to its unique priority. 
However, in Algorithm~\ref{algorithm_max}, the ``leading mass'' does not fulfill ``Event Trigger Conditions~$2$'' which means that it will not be transmitted. 
This means that the operation of Algorithm~\ref{algorithm_max} leverages on the fact that the number of initial masses is finite (and equal to $n$) and thus their summation into one (or multiple) ``leading mass'' (or ``leading masses'') will occur after a finite number of steps. 
Once this happens, transmissions between nodes will continue for a maximum of $n$ time steps (from ``Event Trigger Conditions~$1$'') and then will cease. 
\end{remark}

\section{SIMULATIONS AND COMPARISONS} \label{results}

In this section, we illustrate the behavior of the proposed distributed algorithms for a random graph of size $n = 20$ nodes, a ring-shaped directed graph of size $n = 20$ nodes and a ring-shaped undirected graph of size $n = 20$ nodes. 
We also compare the proposed algorithms against the current state-of-the-art, trying to point out key differences and limitations in each approach.
Specifically, we first illustrate the operation of Algorithms~\ref{algorithm_prob}, \ref{algorithm1} and \ref{algorithm_max} in digraphs of size $n=20$ nodes. 

Figure~\ref{demon20} shows what happens in the case of a randomly, created graph of $20$ nodes in which the average of the initial values is equal to $q = \dfrac{500}{20} = 25$. 
We can see that Algorithm~\ref{algorithm_max} outperforms Algorithms~\ref{algorithm_prob} and \ref{algorithm1}.

\vspace{-0.3cm}

\begin{figure} [ht]
\centering
\includegraphics[width=80mm]{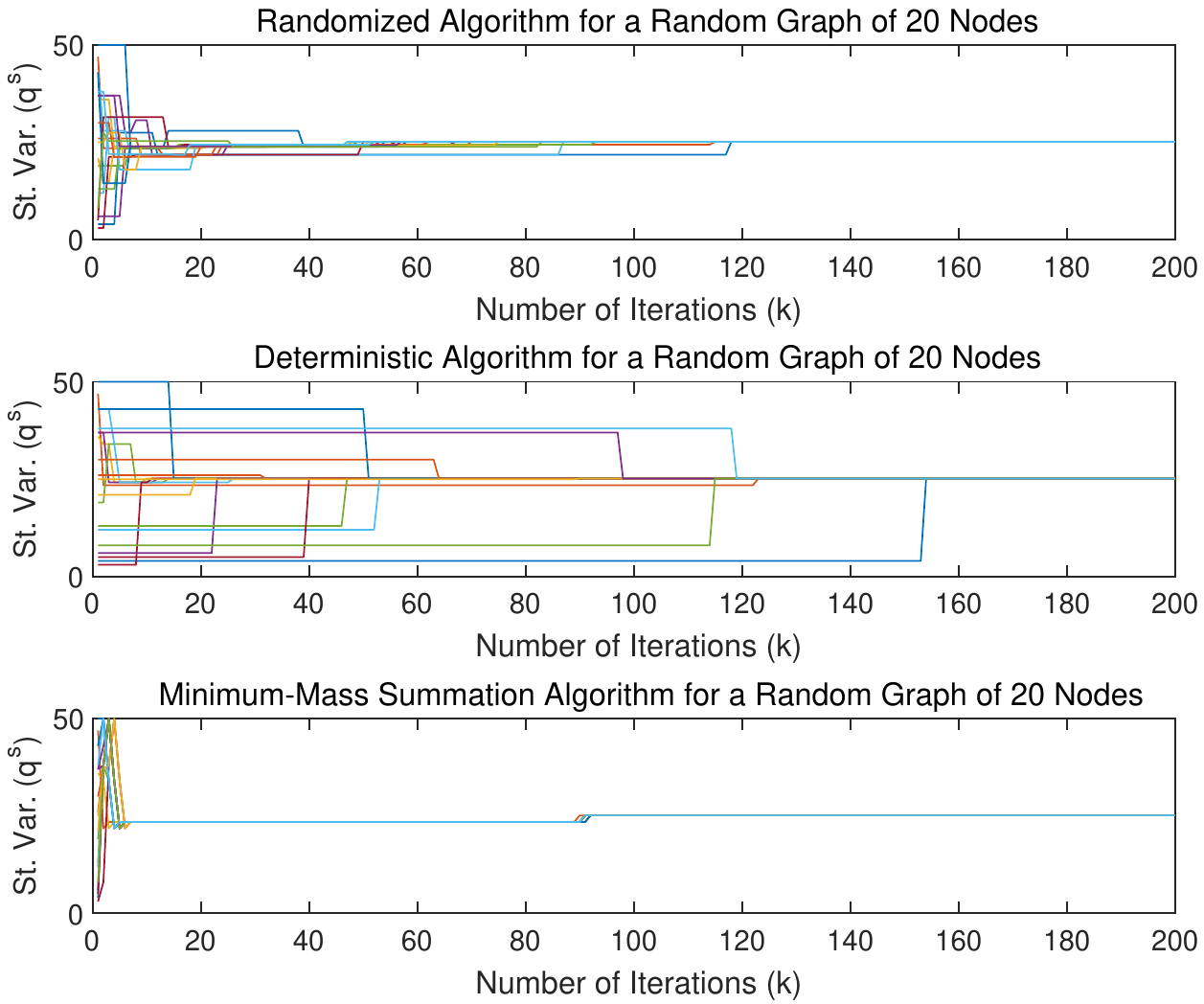}
\caption{Comparison between Algorithm~\ref{algorithm_prob}, Algorithm~\ref{algorithm1} and Algorithm~\ref{algorithm_max} for a random digraph of $20$ nodes: 
Node state variables plotted against the number of iterations for Algorithm~\ref{algorithm_prob} (\emph{top figure}), Algorithm~\ref{algorithm1} (\emph{middle figure}), and Algorithm~\ref{algorithm_max} (\emph{bottom figure})\vspace{-0.45cm}.}
\label{demon20}
\end{figure}

\vspace{0.3cm}

\begin{remark}
In Figure~\ref{demon20}, we observe that the operation of Algorithm~\ref{algorithm_max} allows agents, after a small amount of steps, to reach consensus to a value that is close but not necessarily equal to the average of the initial values. 
Eventually, this consensus value changes (around time step $90$) and becomes equal to the average of the initial values. 
This feature of Algorithm~\ref{algorithm_max} may be useful in situations in which the agents of a network need to coordinate their operations fast (i.e., reach a common decision) so the overall operation of the network is not disrupted greatly during the calculation of the exact average of the initial values (e.g., UAV flocking). 
However, similar behavior (i.e., consensus to a common value during the calculation of the average) can be observed in Algorithms~\ref{algorithm_prob} and \ref{algorithm1} if we modify them so they perform, along with their protocols, a ``leading mass'' max-voting (this effectively implies that the node that has the ``leading mass'' broadcasts is state variables to its out-neighbors).  
\end{remark}

Figures~\ref{compar20_dir_ring} and \ref{compar20_undir_ring} show what happens in the cases of a ring-shaped digraph of $20$ nodes and a ring-shaped undirected graph of $20$ nodes, respectively, in which the average of the nodes initial values is equal to $q = \dfrac{480}{20} = 24$. 
Again Algorithm~\ref{algorithm_max} appears to outperform Algorithms~\ref{algorithm_prob} and \ref{algorithm1}.

\vspace{-0.3cm}

\begin{figure} [ht]
\centering
\includegraphics[width=85mm]{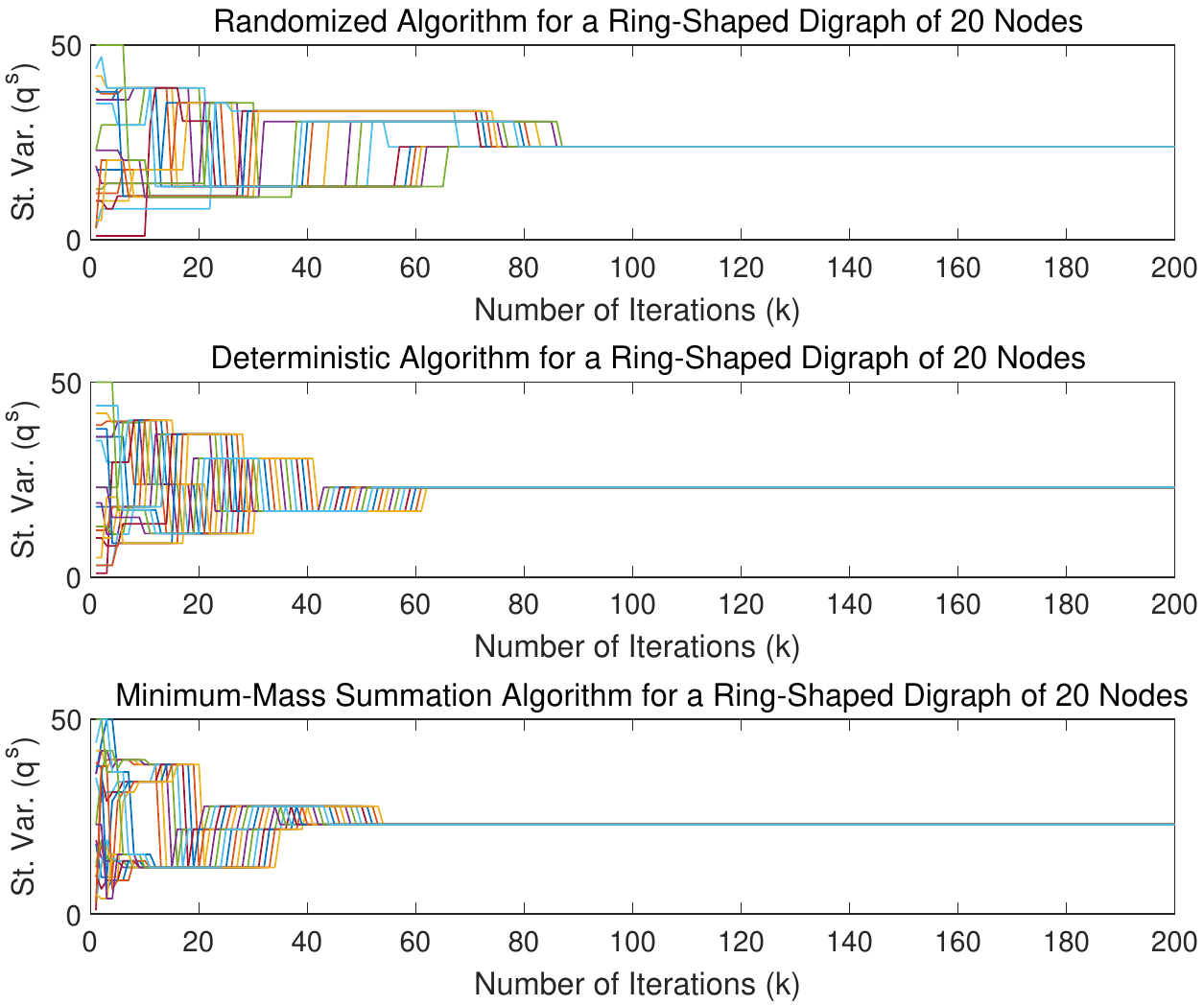}
\caption{Comparison between Algorithm~\ref{algorithm_prob}, Algorithm~\ref{algorithm1} and Algorithm~\ref{algorithm_max} for a ring-shaped digraph of $20$ nodes: 
Node state variables plotted against the number of iterations for Algorithm~\ref{algorithm_prob} (\emph{top figure}), Algorithm~\ref{algorithm1} (\emph{middle figure}), and Algorithm~\ref{algorithm_max} (\emph{bottom figure})\vspace{-0.45cm}.}
\label{compar20_dir_ring}
\end{figure}

\begin{figure} [ht]
\centering
\includegraphics[width=85mm]{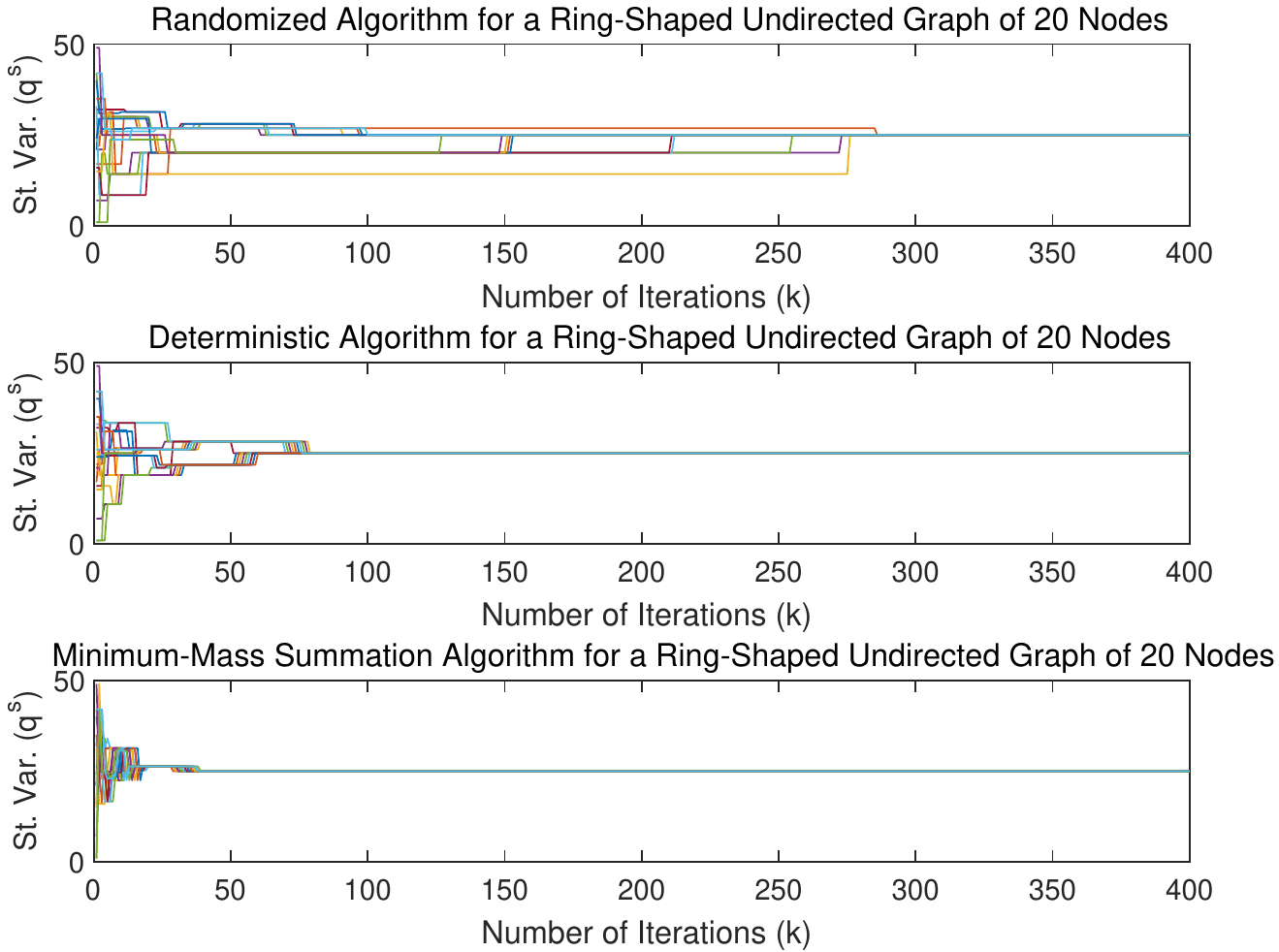}
\caption{Comparison between Algorithm~\ref{algorithm_prob}, Algorithm~\ref{algorithm1} and Algorithm~\ref{algorithm_max} for a ring-shaped undirected graph of $20$ nodes: 
Node state variables plotted against the number of iterations for Algorithm~\ref{algorithm_prob} (\emph{top figure}), Algorithm~\ref{algorithm1} (\emph{middle figure}), and Algorithm~\ref{algorithm_max} (\emph{bottom figure})\vspace{-0.45cm}.}
\label{compar20_undir_ring}
\end{figure}

\vspace{0.3cm}

Now, we compare the performance of the proposed algorithms against three other algorithms: (a) the quantized gossip algorithm presented in \cite{2007:Basar} in which, at each time step $k$, one edge\footnote{Note here that the algorithm presented in \cite{2007:Basar} requires the underlying graph to be undirected. For this reason, in Figure~\ref{comp20}, we consider, for the algorithm in \cite{2007:Basar}, the underlying graph to be undirected (i.e., if $(v_j, v_i) \in \mathcal{E}$ then $(v_i, v_j) \in \mathcal{E}$) while, for the algorithms in \cite{2011:Cai_Ishii, 2016:Chamie_Basar} we consider the underlying graph to be directed.} is selected at random, independently from earlier instants and the values of the nodes that the selected edge is incident on are updated, (b) the quantized asymmetric averaging algorithm presented in \cite{2011:Cai_Ishii} in which, at each time step $k$, one edge, say edge $(v_l, v_j)$, is selected at random and, node $v_j$ sends its state information and surplus and node $v_l$ performs updates 
over its own state and surplus values, (c) the distributed averaging algorithm with quantized communication presented in \cite{2016:Chamie_Basar} in which, at each time step $k$, each agent $v_j$ broadcasts a quantized version of its own state value towards its out-neighbors.

\begin{figure*}[ht]
\centering
\includegraphics[width=55mm]{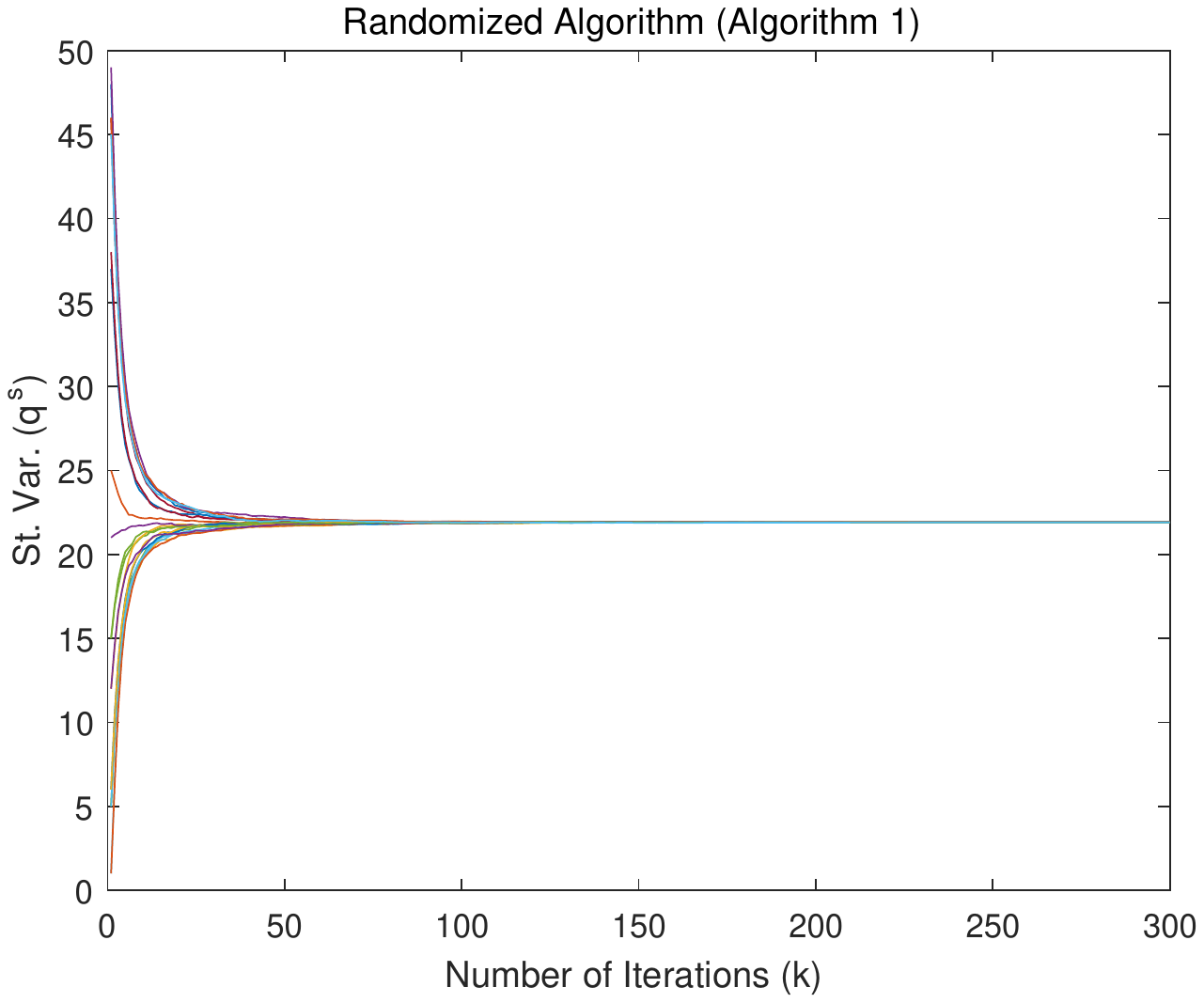}~~\hspace*{0.5cm}
\includegraphics[width=55mm]{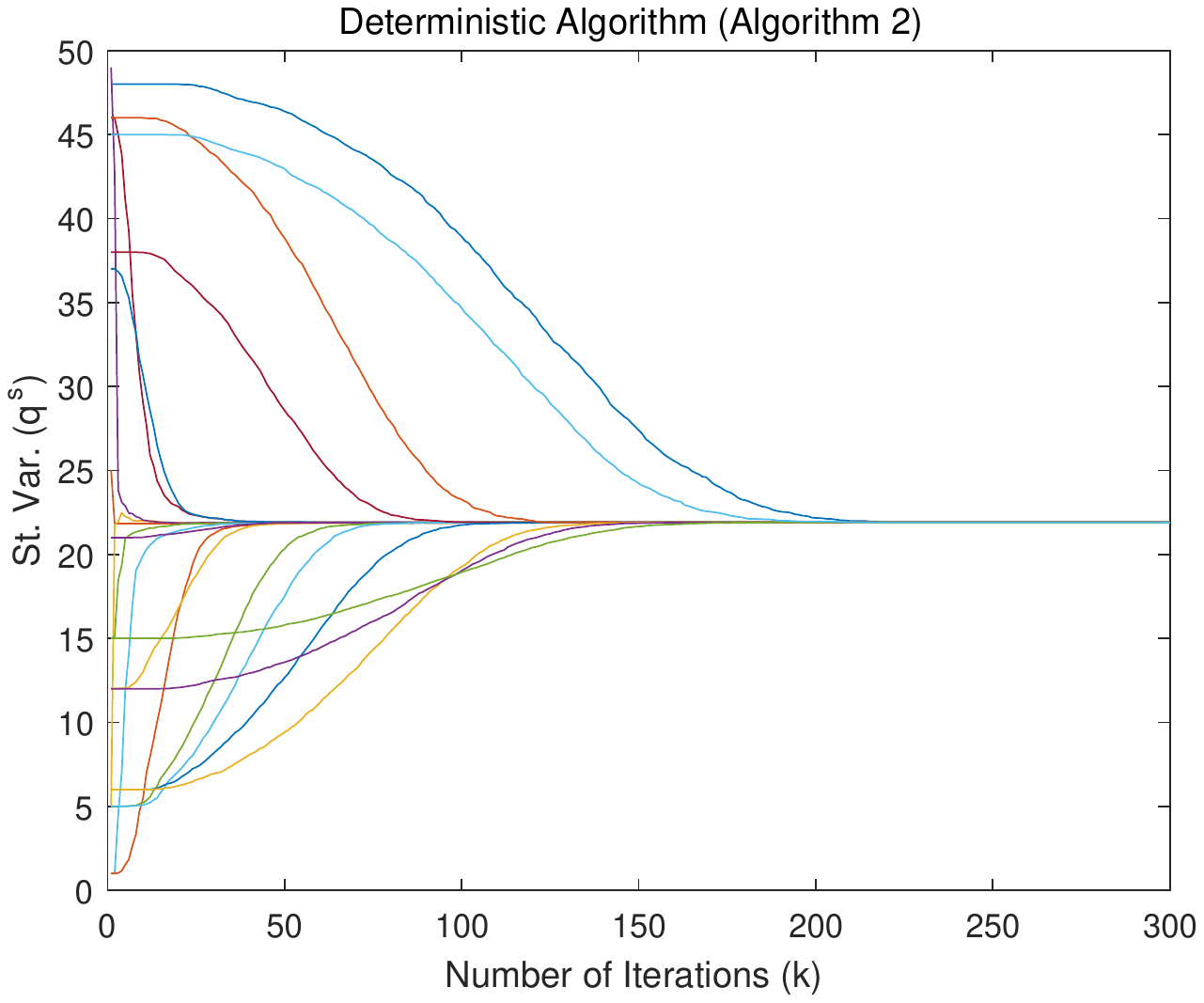}~~\hspace*{0.5cm}
\includegraphics[width=55mm]{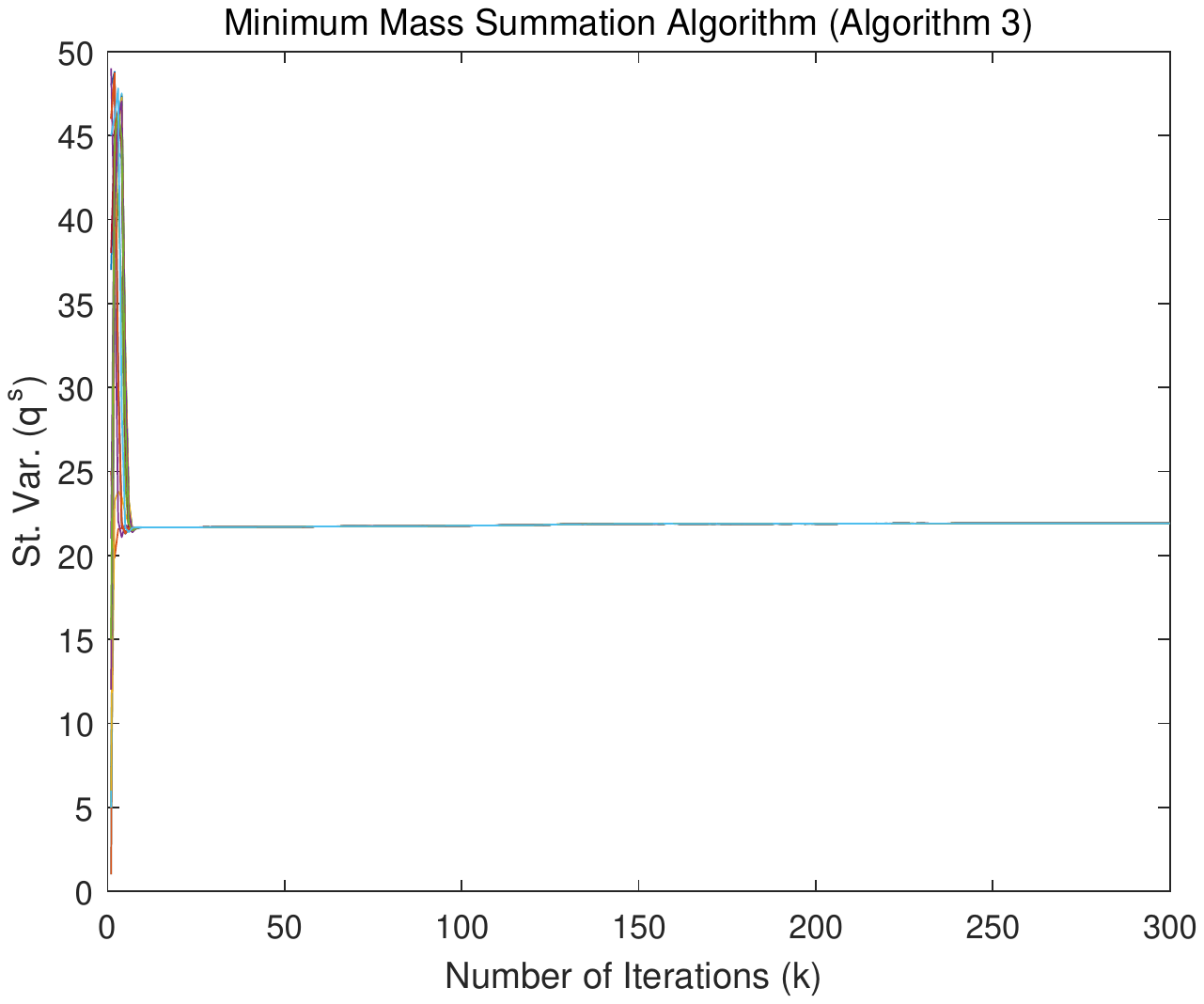} \\ 
\includegraphics[width=55mm]{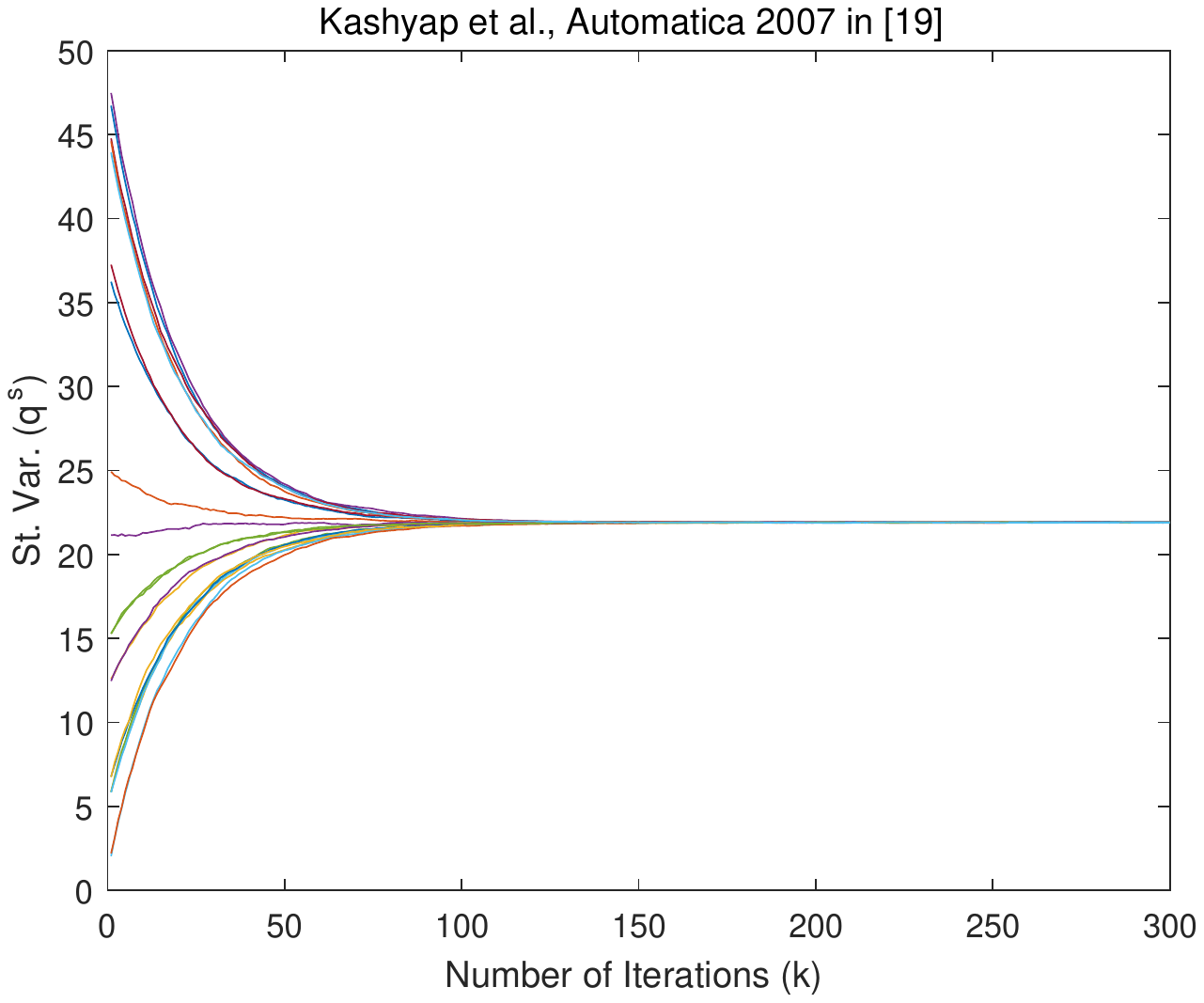}~~\hspace*{0.5cm}
\includegraphics[width=55mm]{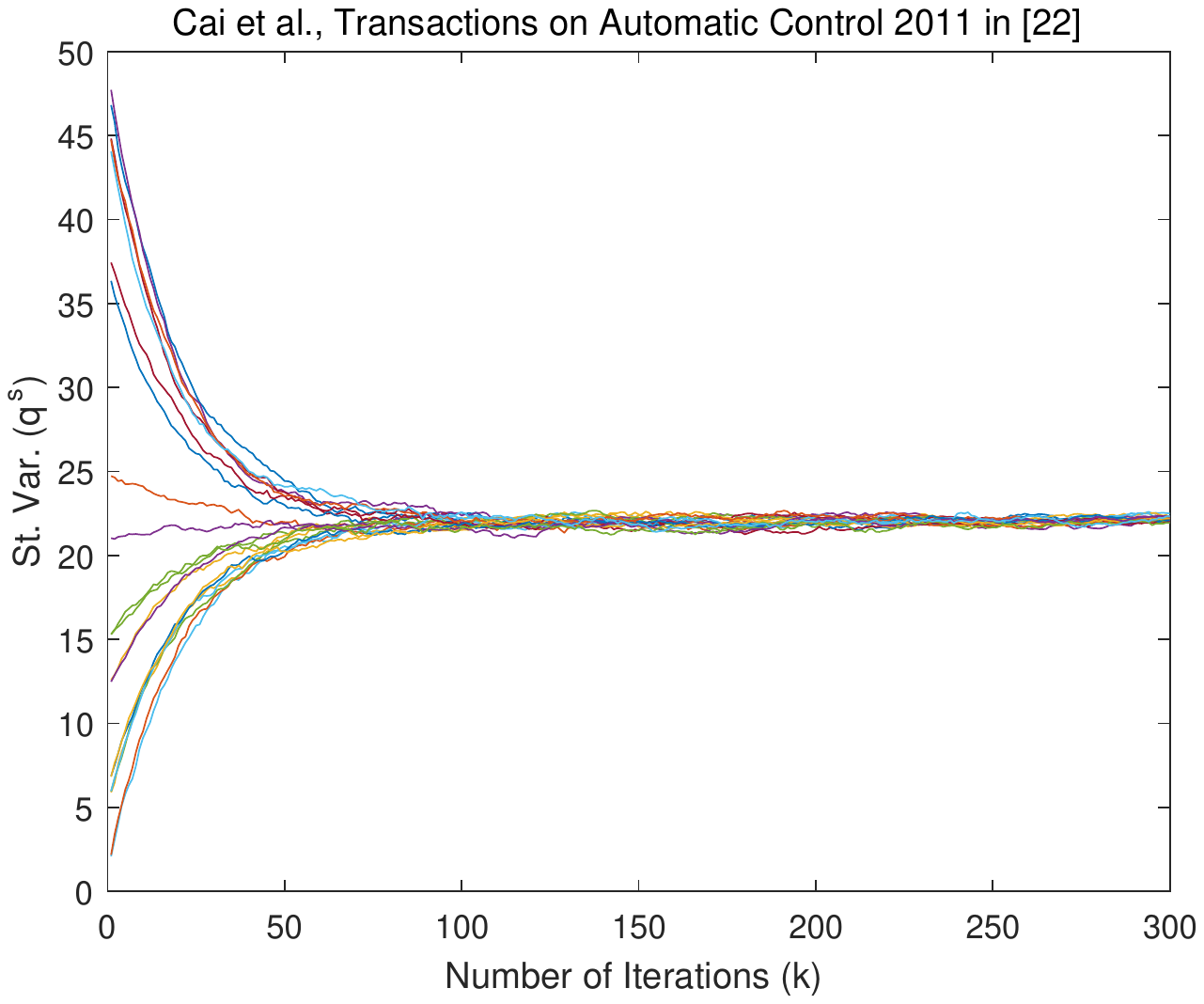}~~\hspace*{0.5cm}
\includegraphics[width=55mm]{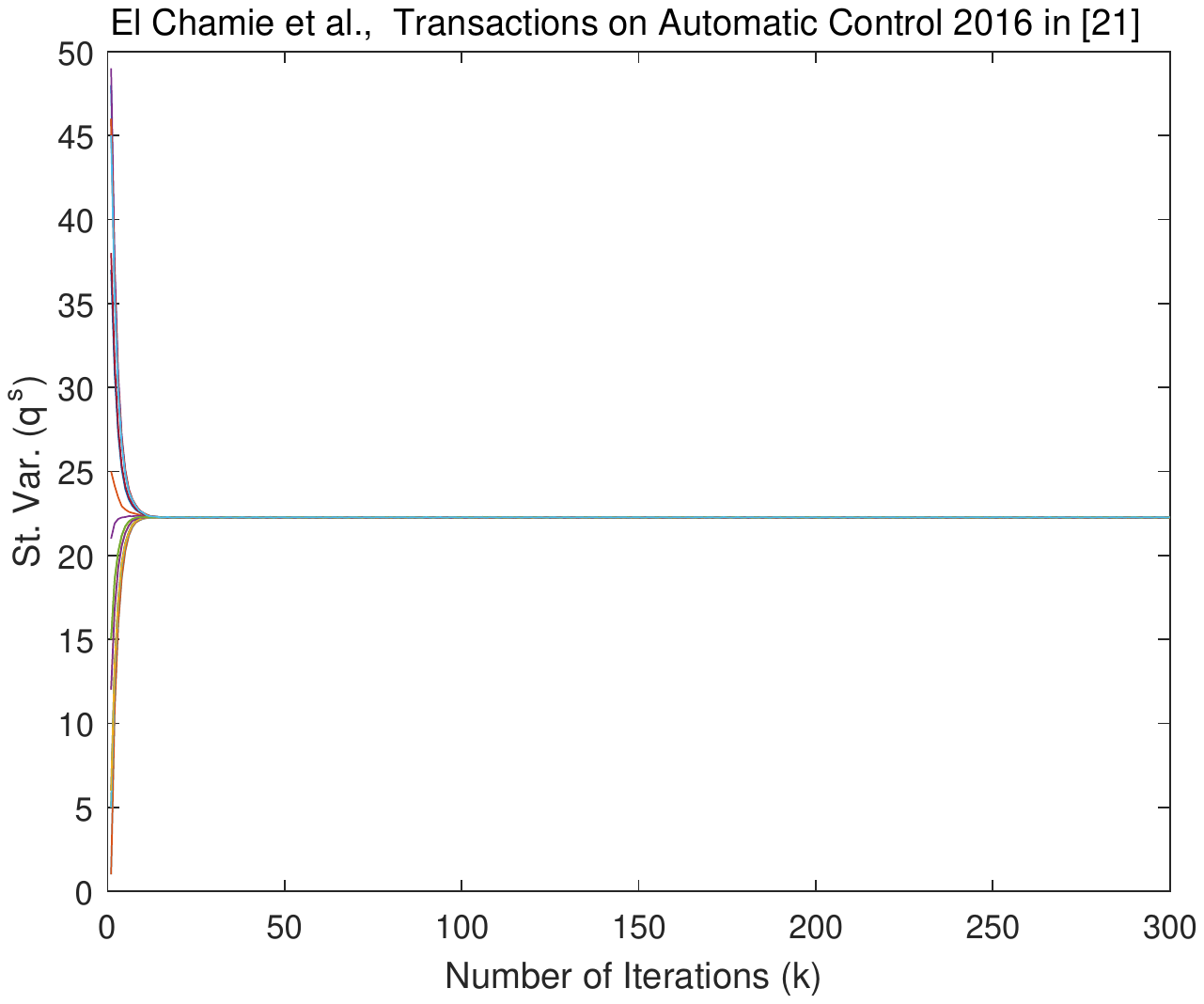}
\caption{Comparison between Algorithm~\ref{algorithm_prob}, Algorithm~\ref{algorithm1}, Algorithm~\ref{algorithm_max}, the quantized gossip algorithm presented in \cite{2007:Basar}, the quantized asymmetric averaging algorithm presented in \cite{2011:Cai_Ishii}, and the distributed averaging algorithm with quantized communication presented in \cite{2016:Chamie_Basar} for $1000$ random averaged digraphs of $20$ nodes each.}
\label{comp20}
\end{figure*}

Figure~\ref{comp20} presents a study of the case of $1000$ digraphs of $20$ nodes each, in which the average of the nodes initial values is equal to $q = \dfrac{440}{20} = 22$. 
The results shown are the average results over $1000$ graphs. 
The top of Figure~\ref{comp20} shows that Algorithm~\ref{algorithm_prob} and Algorithm~\ref{algorithm_max} outperform the quantized gossip algorithm presented in \cite{2007:Basar} (for which the underlying graph is undirected), while Algorithm~\ref{algorithm1} requires a relatively larger number of iterations to converge. 
Furthermore, we can see that Algorithm~\ref{algorithm_prob}, Algorithm~\ref{algorithm1} and Algorithm~\ref{algorithm_max} outperform the quantized asymmetric averaging algorithm presented in \cite{2011:Cai_Ishii}. 
However, the distributed averaging algorithm presented in \cite{2016:Chamie_Basar} is able to outperform all the aforementioned algorithms due to the fact that nodes can process real numbers and use a set of weights that form a doubly stochastic matrix (which guarantees convergence proportional to the second largest magnitude of its eigenvalues). 
Notice that obtaining a set of weights that forms a doubly stochastic matrix is an easy task in undirected graphs but becomes challenging in directed graphs \cite{2018:BOOK_Hadj}.

\section{CONCLUDING REMARKS}\label{future}

In this work, we have considered the quantized average consensus problem for a distributed system that forms a directed graph in which the agents can exchange quantized information in a distributed fashion. 
Quantized average consensus plays a key role in a number of applications, which aim at more efficient usage of network resources. 
We proposed three distributed algorithms for solving the quantized average consensus problem. 
For the first {\em probabilistic} distributed algorithm, we showed that each agent achieves quantized average consensus with probability one. 
For the second and third {\em deterministic} distributed algorithms, we showed that quantized average consensus is achieved after a finite number of iterations that we explicitly bounded; for the latter, we also showed that once quantized average consensus is achieved, transmissions are ceased from each agent. 
To the best of our knowledge, these are the first {\em deterministic} algorithms which allow convergence to the quantized average of the initial values after a finite number of time steps without any specific requirements regarding the network that describes the underlying communication topology (see \cite{2016:Chamie_Basar}).

In the future, we plan to extend the proposed algorithms over other types of quantized consensus, such as quantized average consensus on networks with time-varying topologies or networks with bounded or unbounded (packet drops) transmission delays over the communication links. 
Furthermore, we plan to design distributed strategies under which every agent in the network will be able to determine whether quantized average consensus has been reached (and thus proceed to execute more complicated control or coordination tasks). 
Finally, we also plan to design algorithms that achieve quantized average consensus by preserving the privacy of the participating agents.

\vspace{-0.3cm}

\bibliographystyle{IEEEtran}
\bibliography{bibliografia_consensus}

\vspace{-0.3cm}

\begin{IEEEbiography}[{\includegraphics[width=1in,height=1.25in,clip,keepaspectratio]{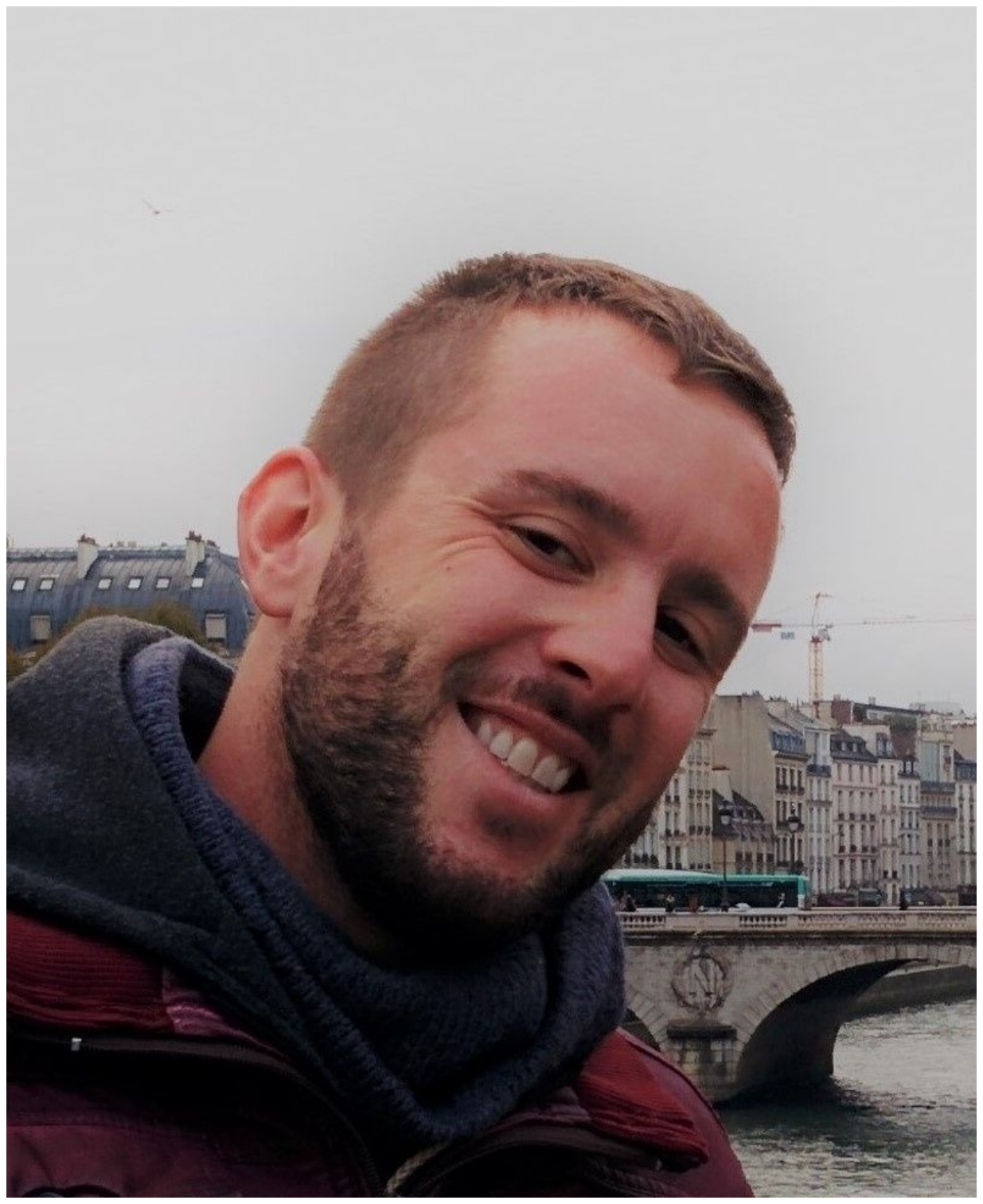}}]{Apostolos I. Rikos} (M'16) received the B.Sc., M.Sc and Ph.D. degrees in Electrical Engineering from the Department of Electrical and Computer Engineering, University of Cyprus in 2010, 2012 and 2018 respectively.
In 2018, he joined the KIOS Research and Innovation Centre of Excellence in Cyprus. 
Since February 2020, he has been a postdoctoral researcher at the Automatic Control Department of KTH Royal Institute of Technology. 
His research interests are in the area of distributed systems, coordination and control of networks of autonomous agents, sensor networks, stochastic processes, optimization and graph theory.
\end{IEEEbiography}

\vspace{-0.2cm}

\begin{IEEEbiography}[{\includegraphics[width=1in,height=1.25in,clip,keepaspectratio]{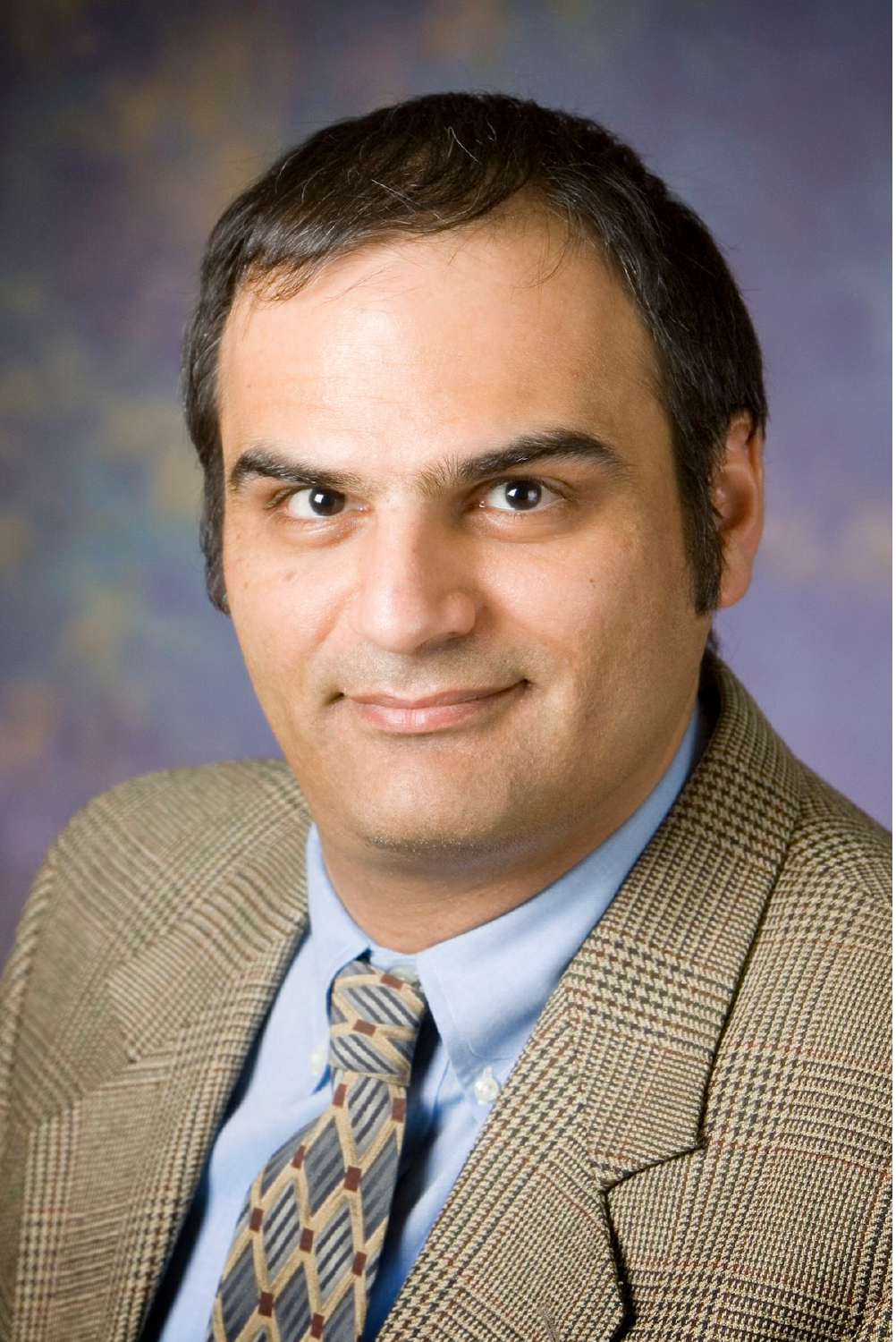}}]{Christoforos N. Hadjicostis}
(M'99, SM'05) received the S.B. degrees in electrical engineering, computer science and engineering, and in mathematics, the M.Eng. degree in electrical engineering and computer science in 1995, and the Ph.D. degree in electrical engineering and computer science in 1999, all from Massachusetts Institute of Technology, Cambridge. In 1999, he joined the Faculty at the University of Illinois at Urbana--Champaign, where he served as Assistant and then Associate Professor with the Department of Electrical and Computer Engineering, the Coordinated Science Laboratory, and the Information Trust Institute. Since 2007, he has been with the Department of Electrical and Computer Engineering, University of Cyprus, where he is currently Professor. His research focuses on fault diagnosis and tolerance in distributed dynamic systems, error control coding, monitoring, diagnosis and control of large-scale discrete-event systems, and applications to network security, anomaly detection, energy distribution systems, medical diagnosis, biosequencing, and genetic regulatory models. He currently serves as Departmental Editor of the Journal of Discrete Event Systems and as Associate Editor of Automatica and the Journal of Nonlinear Analysis of Hybrid Systems. In the past, he had served as Associate of IEEE Transactions on Automatic Control, IEEE Transactions on Automation Science and Engineering, IEEE Transactions on Control Systems Technology, and IEEE Transactions on Circuits and Systems~I.
\end{IEEEbiography}

\end{document}